\newcolumntype{C}[1]{>{\centering\let\newline\\\arraybackslash\hspace{0pt}}m{#1}}
\newtheorem{theor}{Theorem}
\newcommand{\placetextbox}[3]{% \placetextbox{<horizontal pos>}{<vertical pos>}{<stuff>}
  \setbox0=\hbox{#3}% Put <stuff> in a box
  \AddToShipoutPictureFG*{% Add <stuff> to current page foreground
    \put(\LenToUnit{#1\paperwidth},\LenToUnit{#2\paperheight}){\vtop{{\null}\makebox[0pt][c]{#3}}}%
  }%
}%
\begin{document}

\placetextbox{0.5}{0.97}{\normalfont \textcopyright 2020. This manuscript version is made available under the CC-BY-NC-ND 4.0 license}

\placetextbox{0.5}{0.955}{\normalfont \url{http://creativecommons.org/licenses/by-nc-nd/4.0/}}

\placetextbox{0.5}{0.93}{\normalfont Author accepted manuscript, published in}

\placetextbox{0.5}{0.915}{\normalfont ``Expert Systems With Applications, 2019, 122: 262--280''.}

\placetextbox{0.5}{0.90}{\normalfont DOI: \url{https://doi.org/10.1016/j.eswa.2019.01.008}.}%

%\begin{frontmatter}

%% Title, authors and addresses

%% use the tnoteref command within \title for footnotes;
%% use the tnotetext command for theassociated footnote;
%% use the fnref command within \author or \address for footnotes;
%% use the fntext command for theassociated footnote;
%% use the corref command within \author for corresponding author footnotes;
%% use the cortext command for theassociated footnote;
%% use the ead command for the email address,
%% and the form \ead[url] for the home page:

%% \title{Title\tnoteref{ICA}}
%% \tnotetext[label1]{}
%% \author{Name\corref{cor1}\fnref{label2}}
%% \ead{email address}
%% \ead[url]{home page}
%% \fntext[label2]{}
%% \cortext[cor1]{}
%% \address{Address\fnref{label3}}
%% \fntext[label3]{}

\begin{center}
{\Large Application of independent component analysis and TOPSIS to deal with dependent criteria in multicriteria decision problems}
\end{center}

%% use optional labels to link authors explicitly to addresses:
%% \author[label1,label2]{}
%% \address[label1]{}
%% \address[label2]{}

\vspace{1cm}

Guilherme Dean Pelegrina$^{a*}$, Leonardo Tomazeli Duarte$^b$, João Marcos Travassos Romano$^a$

\vspace{0.5cm}

{\small $^a$School of Electrical and Computer Engineering (FEEC), University of Campinas (UNICAMP), Campinas, Brazil; guipelegrina@gmail.com, romano@dmo.fee.unicamp.br

$^b$School of Applied Sciences (FCA), University of Campinas (UNICAMP), Limeira, Brazil; leonardo.duarte@fca.unicamp.br

$^*$Corresponding author: Phone number: +33 06 58 94 09 93}

\vspace{1cm}

\noindent \textbf{Abstract}

\noindent A vast number of multicriteria decision making methods have been developed to deal with the problem of ranking a set of alternatives evaluated in a multicriteria fashion. Very often, these methods assume that the evaluation among criteria is statistically independent. However, in actual problems, the observed data may comprise dependent criteria, which, among other problems, may result in biased rankings. In order to deal with this issue, we propose a novel approach whose aim is to estimate, from the observed data, a set of independent latent criteria, which can be seen as an alternative representation of the original decision matrix. A central element of our approach is to formulate the decision problem as a blind source separation problem, which allows us to apply independent component analysis techniques to estimate the latent criteria. Moreover, we consider TOPSIS-based approaches to obtain the ranking of alternatives from the latent criteria. Results in both synthetic and actual data attest the relevance of the proposed approach.

\vspace{0.5cm}

\textit{Keywords}: multicriteria decision making, dependence among criteria, independent component analysis, TOPSIS.

%% PACS codes here, in the form: \PACS code \sep code

%% MSC codes here, in the form: \MSC code \sep code
%% or \MSC[2008] code \sep code (2000 is the default)

%% \linenumbers

%% main text

\newpage

\section{Introduction}
\label{sec:intro}

Either in personal life or in industrial environments, most of the decisions involve situations where there is a large amount of information. In that respect, the development of appropriate methods which can deal with those situations has become an important issue in various research domains, such as operational research, management science, economics and computer science~\cite{Zavadskas2014}.

A typical problem addressed in decision making is how to obtain a ranking of a set of alternatives given their evaluation according to a set of criteria. In order to deal with this problem, several Multicriteria Decision Making (MCDM) methods have been developed along the last decades~\cite{Figueira2005,Tzeng2011}. Examples of existing approaches include the Simple Additive Weighting (SAW) method~\cite{Tzeng2011}, the Analytic Hierarchy Process (AHP)~\cite{Saaty1987}, the ELECTRE methods~\cite{Roy1968,Govindan2016} and the Technique for Order Preference by Similarity to Ideal Solution (TOPSIS)~\cite{Hwang1981}. For instance, the TOPSIS and some extended versions have been used in several applications~\cite{Behzadian2012,Zavadskas2016,Yoon2017}, such as in energy planning~\cite{Kaya2011} and factory maintenance~\cite{Cables2012} problems, lean performance~\cite{Kumar2013} and pure-play e-commerce companies~\cite{Bai2014} evaluation, and identification of spreading ability of nodes~\cite{Hu2016}.

However, the existing approaches often consider that the evaluation among the observed criteria are independent\footnote{In this work, independence (or dependence) among criteria will refer to statistical independence (or statistical dependence) among the observed evaluations with respect to the decision criteria. See~\cite{Roy1996} for an interesting discussion on this topic.}, which may not be true in real applications. As a consequence, the results may be biased toward alternatives that have good evaluations in two or more dependent criteria, since they measure similar characteristics. One may cite as an example the problem of ranking a set of students according to their grades in a set of subjects~\cite{Grabisch1996}, such as human resources, physics and linear algebra. In this scenario, since both physics and linear algebra measure similar competences, the ranking may be biased towards students with good grades in these subjects. 

In the literature, one may find several approaches that aim at dealing with dependent criteria in MCDM problems. For instance, a well-known is the application of Choquet integrals~\cite{Grabisch1996,Grabisch2010}, which is a non-linear aggregator that can be used to model interactions (redundancy and/or synergy) among criteria through fuzzy parameters. Moreover, Figueira, Greco and Roy~\citeyear{Figueira2009} proposed an extended version of ELECTRE methods, which performs interaction among criteria by applying a new concordance index. Zhu et al.~\citeyear{Zhu2016} applied Principal Component Analysis (PCA) technique and TOPSIS approach to deal with correlation among the decision data. Wang~\citeyear{Wang2015} and Wang et al.~\citeyear{Wang2017} also combined TOPSIS with PCA and variable clustering, respectively, to address multicollinearity between criteria.

Another TOPSIS-based approach used to deal with dependent criteria in MCDM problems is the TOPSIS-M~\cite{Vega2014}. The original version of TOPSIS relies on a global evaluation of each alternative based on the Euclidean distances between the given alternative and both positive and negative ideal alternatives. On the other hand, TOPSIS-M addresses the dependent criteria by applying the Mahalanobis distance~\cite{Mahalanobis1936,DeMaesschalck2000} instead of the Euclidean one. This feature has motivated the application of TOPSIS-M in several problems. For instance, Antuchevi\v{c}iene et al.~\citeyear{Antuche2010} and Chang et al.~\citeyear{Chang2010} applied the TOPSIS-M approach to rank different types of areas for building redevelopment and to evaluate mutual funds, respectively. Wang and Wang~\citeyear{Wang2014} discussed the same procedure in the context of the Chinese high-tech industry. Chen and Lu~\citeyear{Chen2015} and Chen~\citeyear{Chen2017} used a fuzzy TOPSIS-M to obtain the scores of insurance companies and investment policies, respectively. Wang, Li and Zheng~\citeyear{Wang2018} considered an entropy weighted TOPSIS-M approach in China energy regulation.

In view of well-established theoretical grounds of the TOPSIS approach,  and motivated by the existence of practical problems that have been dealt with by means of TOPSIS-based methods that are able to deal with correlated criteria, our proposal shall be built upon the TOPSIS methodology. In particular, we shall consider in our proposals the classical TOPSIS and on TOPSIS-M variant. An important motivation of our work comes from the observation that, although the TOPSIS-M approach has been used to deal with dependent criteria in MCDM problems, in some situations, the covariance matrix used in Mahalanobis distance may not be sufficient to capture the complexity behind the interactions between the observed criteria. As will be discussed in this paper, the distances calculated in TOPSIS-M can be seen as Euclidean ones in a transformed space in which the decision data are not correlated anymore. In other words, this approach aims at finding an alternative representation of the data in which the distances calculation should be conducted. However, the decorrelation procedure of TOPSIS-M approach does not necessarily imply in a representation in which the decision data are independent~\cite{Papoulis2002}. Therefore, an approach that takes into account the independence among the decision data can better deal with dependent criteria in MCDM problems.

An interesting aspect in the above-mentioned scenario is that the procedure to find an alternative representation of the decision data can be seen as a task of recovering a set of independent latent (hidden) data from the observed criteria. This is a well-know problem in signal processing, called Blind Source Separation (BSS)~\cite{Comon2010}, whose aim is to retrieve a set of signal sources based on a mixture of these sources, without the knowledge of the mixing process. In that respect, our proposal relies on the formulation of the addressed MCDM problem as a BSS problem and apply an Independent Component Analysis (ICA) technique~\cite{Hyvarinen2001} in order to estimate the latent criteria, which are assumed to be mutually independent. Since the latent criteria can be seen as the alternative representation of the decision data in which the criteria are independent, one can use the estimates as inputs of TOPSIS methods. Therefore, this paper proposes two different approaches to deal with dependent criteria in MCDM problems. The first one, called ICA-TOPSIS, comprises the application of an ICA technique to estimate the latent criteria and, thus, perform the TOPSIS on this retrieved data. The other one, called ICA-TOPSIS-M, also applies an ICA technique to estimate the latent criteria, but uses this retrieved data only to determine the positive and negative ideal alternatives, which will be used as an input in a modified TOPSIS-M approach.

The next sections are organized as follows. In Section~\ref{sec:prob}, we present the MCDM problem addressed in this paper. Section~\ref{sec:theo} discusses the theoretical aspects of TOPSIS approaches and BSS methods. Both ICA-TOPSIS and ICA-TOPSIS-M approaches proposed in this paper are described in Section~\ref{sec:proposal}. In Section~\ref{sec:exper}, we present a set of numerical experiments and the obtained results. Finally, in Section~\ref{sec:concl}, we highlight our conclusions and future perspectives.

It is worth mentioning that the conference paper~\cite{Pelegrina2018} presented a preliminary discussion on the application of ICA methods to deal with dependent criteria in MCDM problems. However, in this paper, we provide a novel theoretic result with respect to the TOPSIS-M approach and improve the set of numerical experiments, which includes scenarios with more than two decision criteria and also with different number of alternatives. Moreover, we apply our proposal on real data.

\section{Problem statement}
\label{sec:prob}

The MCDM problem addressed in this paper consists in ranking a set of $K$ alternatives $\mathcal{A} = \left[ \mathcal{A}_1, \mathcal{A}_2, \ldots, \mathcal{A}_K \right]$ based on an observed decision data $\mathbf{V}$, defined by
\begin{equation}
\label{eq:dec_data}
\mathbf{V}=\left[\begin{array}{cccc}
v_{1,1} & v_{1,2} & \ldots & v_{1,K} \\
v_{2,1} & v_{2,2} & \ldots & v_{2,K} \\ 
\vdots & \vdots & \ddots & \vdots \\
v_{M,1} & v_{M,2} & \ldots & v_{M,K}
\end{array}
\right],
\end{equation}
where $v_{m,k}$ represents the evaluation of alternative $\mathcal{A}_k$ ($k=1, 2, \ldots, K$) with respect to the criterion $\mathcal{C}_m$ ($m=1, 2, \ldots, M$). In order to obtain the ranking of the alternatives, one derives a global evaluation for each $\mathcal{A}_k$ through an aggregation procedure on $v_{m,k}$, $j=1, \ldots, M$, and based on a set of weights $\mathbf{w}=[w_{1}, w_{2}, \ldots, w_{M}]$, which represent the degree of ``importance'' for the criterion $\mathcal{C}_m$ in the decision problem\footnote{Although it is not required in all MCDM methods, one generally defines $w_m \geq 0$, $m=1, 2, \ldots, M$, and $\sum_{m=1}^M{w_m}=1$.}.

Generally, a MCDM method is applied directly on the observed decision data $\mathbf{V}$. However, since we are interested in MCDM problems in which the decision data $\mathbf{V}$ can be seen as a set of dependent criteria, we consider that an alternative representation of the decision data, in which the criteria are independent, should be used as input of TOPSIS method. The procedure of extracting this new representation from the decision data can be seen as a problem of retrieving independent latent factors from a set of mixtures of these factors.

Mathematically, consider that the independent latent data is defined by 
\begin{equation}
\label{eq:lat_data}
\mathbf{L}=\left[\begin{array}{cccc}
l_{1,1} & l_{1,2} & \ldots & l_{1,K} \\
l_{2,1} & l_{2,2} & \ldots & l_{2,K} \\ 
\vdots & \vdots & \ddots & \vdots \\
l_{N,1} & l_{N,2} & \ldots & l_{N,K}
\end{array}
\right],
\end{equation}
where $l_{n,k}$ represents the evaluation of alternative $\mathcal{A}_k$ with respect to the independent latent criteria\footnote{We define the set of independent latent criteria as $\mathcal{L}=\left[ \mathcal{L}_1, \mathcal{L}_2, \ldots, \mathcal{L}_N \right]$.} $\mathcal{L}_n$. Therefore, by considering a signal processing formulation, data $\mathbf{V}$ can be modeled as
\begin{equation}
\label{eq:mix_prop}
\mathbf{V} = \mathbf{A}\mathbf{L} + \mathbf{G},
\end{equation}
where $\mathbf{A}=\left( a_{m,n} \right) \in \mathbb{R}^{M \times N}$ is the mixing matrix (which provides the linear mixture of the independent data) and $\mathbf{G}=\left( g_{m,k} \right) \in \mathbb{R}^{M \times K}$ is the additive white Gaussian noise (AWGN) matrix (which includes randomness in the mixing process). For instance, the evaluation of alternative $\mathcal{A}_k$ with respect to the observed criterion $\mathcal{C}_m$ comprises a linear combination of the evaluations of alternative $\mathcal{A}_k$ with respect to the set of independent latent criteria $\mathcal{L}$, i.e. $v_{m,k} = a_{m,1}l_{1,k} + a_{m,2}l_{2,k} + \ldots + a_{m,N}l_{N,k} + g_{m,k}$.

If we consider that $\mathbf{L}$ is the alternative representation of the decision data that should be used to derive the ranking of alternatives, the application of TOPSIS directly on the dependent\footnote{Since mixing matrix $\mathbf{A}$ provides a linear combination of the latent criteria $\mathbf{L}$, one can consider that the observed data $\mathbf{V}$ is composed by dependent criteria.} criteria $\mathbf{V}$ may lead to biased results. Therefore, the main concern in this problem is how to adjust a transformation matrix $\mathbf{B}$ that provides the estimates
\begin{equation}
\label{eq:demix_mcdm}
\mathbf{\hat{L}} = \mathbf{B}\mathbf{V}
\end{equation}
for the independent data $\mathbf{L}$, i.e. the alternative representation of the decision criteria in which the data are independent. Figure~\ref{fig:problem_repres} illustrates this scenario, in which the procedure used to adjust $\mathbf{B}$ can be seen as a preprocessing step. It is worth mentioning that, in this paper, we consider only to deal with a linear dependence among criteria. However, our approach can be adapted in order to consider nonlinear relations.

\begin{figure}[ht]
\centering
\includegraphics[height=2.85cm]{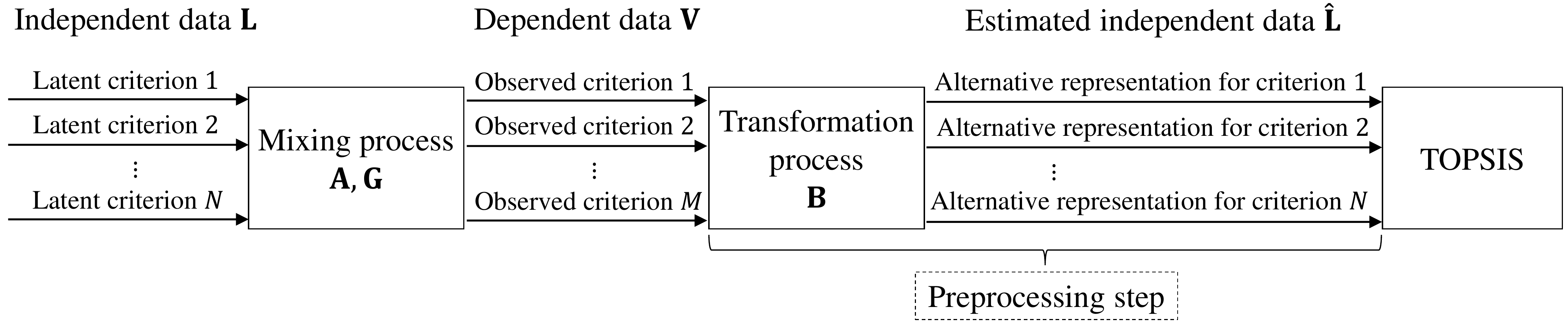}
\caption{Application of TOPSIS in an alternative representation of the decision data.}
\label{fig:problem_repres}
\end{figure}

Since we search for an alternative representation of the observed decision criteria $\mathbf{V}$, it is not straightforward to assign a meaning for this data. However, the rationale behind such model is that the observed criteria can be seen as linear combinations of independent (hidden) criteria. Figure~\ref{fig:problem_example} illustrates the example mentioned in Section~\ref{sec:intro}. In this context, one may consider that the grades (dependent decision data $\mathbf{V}$) may be represented as a set of estimated latent competences (independent data $\mathbf{L}$) in areas such as social sciences, natural sciences and mathematics. For instance, there may be a correlation between the grades of linear algebra and physics because both depend on competences related to mathematics. 

\begin{figure}[ht]
\centering
\includegraphics[height=2.5cm]{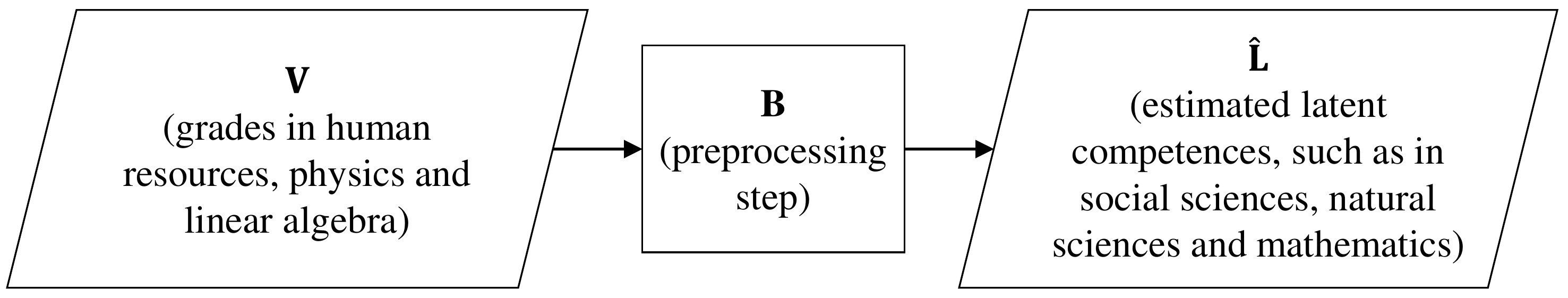}
\caption{Example of dependent decision criteria and estimated latent data.}
\label{fig:problem_example}
\end{figure}

\section{Theoretical background}
\label{sec:theo}

\subsection{TOPSIS and TOPSIS-M}
\label{sec:topsis}

The original version of TOPSIS calculates the global evaluation of each alternative (also called closeness measure) based on the Euclidean distances between the alternative and both positive and negative ideal alternatives. The main idea is to favour alternatives that are close to the positive ideal alternative and also far from the negative one. Algorithm~\ref{alg:topsis_e} describes the steps of this approach\footnote{We consider in this paper that all the criteria are to be maximized, i.e. the larger the better. However, if there are criteria to be minimized, some simple adaptations must be incorporated into algorithm steps. See~\cite{Hwang1981} for further details.}. One may note that $r_k \in [0, 1]$ close to $1$ indicates that the alternative $\mathcal{A}_k$ is close to the positive ideal alternative and far from the negative one. Therefore, the ranking is obtained according to the closeness measure in descending order. 
\begin{algorithm}
%\setstretch{1.00}
%\SetAlgoLined
    \caption{TOPSIS}
    \label{alg:topsis_e}
				\textbf{Input:} Decision data $\mathbf{V}$ and set of weights $\mathbf{w}$. \\
				\textbf{Output:} Closeness measure $\mathbf{r} = \left[r_1, r_2, \ldots, r_K \right]$. \\
				\textit{Step 1: Normalization}. For each evaluation $v_{m,k}$, perform the following normalization:
				\begin{equation*}
				u_{m,k} = \frac{v_{m,k}}{\sqrt{\sum_{k=1}^K{v_{m,k}^2}}}, \, \, \, m=1, \ldots, M, \, k=1, \ldots, K.
				\end{equation*}
				\textit{Step 2: Weighted normalization}. For each normalized evaluation $u_{m,k}$, perform the following weighted normalization:
				\begin{equation*}
				p_{m,k} = w_m u_{m,k}, \, \, \, m=1, \ldots, M, \, k=1, \ldots, K.
				\end{equation*}
				\textit{Step 3: Ideal alternatives determination}. Determine both positive and negative ideal alternatives (PIA and NIA, respectively), defined by
				\begin{equation*}
				PIA = \mathbf{p}^+ = \left\{p_{1}^+, p_{2}^+, \ldots, p_{M}^+\right\} \text{and } NIA = \mathbf{p}^- = \left\{p_{1}^-, p_{2}^-, \ldots, p_{M}^-\right\},
				\end{equation*}
				where $p_{m}^+=\max \{p_{m,k} | 1 \leq k \leq K \}$ and $p_{m}^-= \min \{p_{m,k} | 1 \leq k \leq K \}$, $m=1, \ldots, M$. \\
				\textit{Step 4: Euclidean distances}. Calculate the Euclidean distances from each alternative and both PIA and NIA:
				\begin{equation*}
				D_k^+ = \sqrt{\left(\mathbf{p}_k-\mathbf{p}^+\right)^T\left(\mathbf{p}_k-\mathbf{p}^+\right)}, \, \, \, k=1, \ldots, K,
				\end{equation*}
				and
				\begin{equation*}
				D_k^- = \sqrt{\left(\mathbf{p}_k-\mathbf{p}^-\right)^T\left(\mathbf{p}_k-\mathbf{p}^-\right)}, \, \, \, k=1, \ldots, K,
				\end{equation*}
				where $\mathbf{p}_k=[p_{1,k}, p_{2,k}, \ldots, p_{M,k}]$. \\
				\textit{Step 5: Closeness measure}. For each alternative, calculate the closeness measure:
				\begin{equation*}
				r_k = \frac{D_k^-}{D_k^+ + D_k^-}, \, \, \, k=1, \ldots, K.
				\end{equation*}
\end{algorithm}

The classical TOPSIS method is typically applied in MCDM problems in which the decision criteria are independent. However, the case of dependent decision criteria motivated the development of modified versions of the classical TOPSIS method. For instance, the TOPSIS-M method, which is described in Algorithm~\ref{alg:topsis_m}, addresses the dependence among criteria by exploiting second-order statistics (covariance matrix). An interesting aspect of the TOPSIS-M method is that, since the Mahalonobis distance considers the information about the covariance matrix of the observed decision data, this method addresses the issue of dependent criteria by performing a decorrelation procedure. This result can be understood under the light of the following theorem:

\begin{algorithm}
%\setstretch{1.00}
%\SetAlgoLined
    \caption{TOPSIS-M}
    \label{alg:topsis_m}
				\textbf{Input:} Decision data $\mathbf{V}$ and set of weights $\mathbf{w}$. \\
				\textbf{Output:} Closeness measure $\mathbf{r} = \left[r_1, r_2, \ldots, r_K \right]$. \\
				\textit{Step 1: Normalization}. For each evaluation $v_{m,k}$, perform the following normalization:
				\begin{equation*}
				u_{m,k} = \frac{v_{m,k}}{\sqrt{\sum_{k=1}^K{v_{m,k}^2}}}, \, \, \, m=1, \ldots, M, \, k=1, \ldots, K.
				\end{equation*}
				\textit{Step 2: Ideal alternatives determination}. Determine both positive and negative ideal alternatives (PIA and NIA, respectively), defined by
				\begin{equation*}
				PIA = \mathbf{u}^+ = \left\{u_{1}^+, u_{2}^+, \ldots, u_{M}^+\right\} \text{and } NIA = \mathbf{u}^- = \left\{u_{1}^-, u_{2}^-, \ldots, u_{M}^-\right\},
				\end{equation*}
				where $u_{m}^+=\max \{u_{m,k} | 1 \leq k \leq K \}$ and $u_{m}^-= \min \{u_{m,k} | 1 \leq k \leq K \}$, $m=1, \ldots, M$. \\
				\textit{Step 3: Covariance matrix}. Determine the covariance matrix $\mathbf{\Sigma}_{\mathbf{U}}=E\left\{\left(\mathbf{U} - E\left\{\mathbf{U}\right\}\right)\left(\mathbf{U} - E\left\{\mathbf{U}\right\}\right)^T\right\}$ of $\mathbf{U}=\left(u_{m,k}\right)$ ($\mathbf{\Sigma}_\mathbf{U} \in \mathbb{R}^{M \times M}$). \\
				\textit{Step 4: Mahalanobis distances}. Calculate the Mahalanobis distances from each alternative and both PIA and NIA:
				\begin{equation*}
				\label{eq:dist_mah_pos}
				DM_k^+ = \sqrt{\left(\mathbf{u}_k-\mathbf{u}^+\right)^T \Delta^T\mathbf{\Sigma}_{\mathbf{U}}^{-1} \Delta \left(\mathbf{u}_k-\mathbf{u}^+\right)}, \, \, \, k=1, \ldots, K
				\end{equation*}
				and
				\begin{equation*}
				\label{eq:dist_mah_neg}
				DM_k^- = \sqrt{\left(\mathbf{u}_k-\mathbf{u}^-\right)^T \Delta^T \mathbf{\Sigma}_{\mathbf{U}}^{-1} \Delta \left(\mathbf{u}_k-\mathbf{u}^-\right)}, \, \, \, k=1, \ldots, K,
				\end{equation*}
				where $\mathbf{u}_k=[u_{1,k}, u_{2,k}, \ldots, u_{M,k}]$ and $\Delta = diag\left(w_1, w_2, \ldots, w_M\right)$ is the diagonal matrix whose elements are the weights $\mathbf{w}$. \\
				\textit{Step 5: Closeness measure}. For each alternative, calculate the closeness measure:
				\begin{equation*}
				r_k = \frac{DM_k^-}{DM_k^+ + DM_k^-}, \, \, \, k=1, \ldots, K,
				\end{equation*}
\end{algorithm}

\begin{theor}{}
Assuming the same importance for all the considered observed criteria (i.e. $w_{m}=w_{m'}$ for all $m$ and $m'$) and that $\mathbf{\Sigma}_{\mathbf{U}}$ is a positive definite matrix, the Mahalanobis distance calculated in Algorithm~\ref{alg:topsis_m} is equivalent to the Euclidean one in a transformed space in which the data are uncorrelated.
\end{theor}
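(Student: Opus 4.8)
The plan is to exploit two structural facts that the hypotheses hand us: the equal-weight assumption collapses the diagonal matrix $\Delta$ into a scalar multiple of the identity, and the positive definiteness of $\mathbf{\Sigma}_{\mathbf{U}}$ guarantees an invertible square-root factorization of its inverse. Together these let me rewrite the quadratic form inside the Mahalanobis distance as a plain squared Euclidean norm of a linearly transformed vector, and then verify that the transform genuinely decorrelates the data.

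First I would set $w_m = w$ for every $m$, so that $\Delta = w\mathbf{I}_M$. Substituting into the Mahalanobis distance of Step~4 gives $\Delta^T \mathbf{\Sigma}_{\mathbf{U}}^{-1}\Delta = w^2\mathbf{\Sigma}_{\mathbf{U}}^{-1}$, and the scalar factors out of the square root, yielding $DM_k^{+} = w\sqrt{(\mathbf{u}_k-\mathbf{u}^+)^T \mathbf{\Sigma}_{\mathbf{U}}^{-1}(\mathbf{u}_k-\mathbf{u}^+)}$. Since the same scalar $w$ multiplies both $DM_k^{+}$ and $DM_k^{-}$, it cancels in the closeness measure $r_k$; hence it is irrelevant to the ranking, and I may work with the bare quadratic form induced by $\mathbf{\Sigma}_{\mathbf{U}}^{-1}$.

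Next I would use positive definiteness to factor the inverse. Writing the spectral decomposition $\mathbf{\Sigma}_{\mathbf{U}} = \mathbf{Q}\mathbf{\Lambda}\mathbf{Q}^T$ with $\mathbf{Q}$ orthogonal and $\mathbf{\Lambda}$ diagonal with strictly positive entries, I define the (invertible) whitening matrix $\mathbf{W} = \mathbf{\Lambda}^{-1/2}\mathbf{Q}^T$, so that $\mathbf{\Sigma}_{\mathbf{U}}^{-1} = \mathbf{W}^T\mathbf{W}$. Substituting into the quadratic form gives $(\mathbf{u}_k-\mathbf{u}^+)^T \mathbf{W}^T\mathbf{W}(\mathbf{u}_k-\mathbf{u}^+) = \|\mathbf{W}\mathbf{u}_k - \mathbf{W}\mathbf{u}^+\|^2$, which is precisely the squared Euclidean distance between the transformed points $\mathbf{W}\mathbf{u}_k$ and $\mathbf{W}\mathbf{u}^+$; the identical manipulation applies to $DM_k^{-}$. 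To close the argument I would set $\mathbf{Y} = \mathbf{W}\mathbf{U}$ and compute its covariance directly: $\mathbf{\Sigma}_{\mathbf{Y}} = \mathbf{W}\mathbf{\Sigma}_{\mathbf{U}}\mathbf{W}^T = \mathbf{\Lambda}^{-1/2}\mathbf{Q}^T\mathbf{Q}\mathbf{\Lambda}\mathbf{Q}^T\mathbf{Q}\mathbf{\Lambda}^{-1/2} = \mathbf{I}_M$, which is diagonal, so the components of $\mathbf{Y}$ are mutually uncorrelated.

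Honestly, there is little genuine obstacle here once the pieces are identified; the computation is essentially bookkeeping. The one conceptual point worth stating carefully is that the decorrelation and the distance equality are not two independent requirements on $\mathbf{W}$ but are automatically compatible: for any invertible $\mathbf{W}$ with $\mathbf{\Sigma}_{\mathbf{U}}^{-1} = \mathbf{W}^T\mathbf{W}$ one has $\mathbf{\Sigma}_{\mathbf{U}} = \mathbf{W}^{-1}\mathbf{W}^{-T}$ and therefore $\mathbf{W}\mathbf{\Sigma}_{\mathbf{U}}\mathbf{W}^T = \mathbf{I}_M$, so whitening is a consequence of the same factorization used to reproduce the Euclidean form. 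The only thing that makes this factorization \emph{exist} is positive definiteness of $\mathbf{\Sigma}_{\mathbf{U}}$, which is exactly the hypothesis invoked; I would flag this as the single load-bearing assumption of the proof.
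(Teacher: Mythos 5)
Your proof is correct, but it reaches the result by a different decomposition than the paper's, so a comparison is worthwhile. The paper factors $\mathbf{\Sigma}_{\mathbf{U}} = \mathbf{F}\mathbf{F}^T$ via the Cholesky decomposition and takes $\mathbf{F}^{-1}\Delta$ as the transformation, so that the Mahalanobis distance is \emph{exactly} the Euclidean distance between the transformed points $\widetilde{\mathbf{u}}_k = \mathbf{F}^{-1}\Delta\mathbf{u}_k$ and $\widetilde{\mathbf{u}}^{\pm} = \mathbf{F}^{-1}\Delta\mathbf{u}^{\pm}$; the equal-weight hypothesis is invoked only at the very end, to collapse $\Delta\mathbf{\Sigma}_{\mathbf{U}}\Delta^T$ into $w_m^2\mathbf{\Sigma}_{\mathbf{U}}$ and conclude $\mathbf{\Sigma}_{\widetilde{\mathbf{U}}} = w_m^2\mathbf{I}_M$. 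You do the opposite: you use the equal weights immediately to pull $\Delta = w\mathbf{I}_M$ out of the quadratic form as a scalar, and you whiten with the spectral factor $\mathbf{W} = \mathbf{\Lambda}^{-1/2}\mathbf{Q}^T$ rather than with an inverse triangular factor. Both factorizations exist precisely because $\mathbf{\Sigma}_{\mathbf{U}}$ is positive definite, so you correctly identified the load-bearing assumption, and your closing observation --- that any invertible $\mathbf{W}$ satisfying $\mathbf{\Sigma}_{\mathbf{U}}^{-1} = \mathbf{W}^T\mathbf{W}$ automatically gives $\mathbf{W}\mathbf{\Sigma}_{\mathbf{U}}\mathbf{W}^T = \mathbf{I}_M$ --- is a genuine added value: it unifies your choice and the paper's (which is the case $\mathbf{W} = \mathbf{F}^{-1}$, up to the weight matrix) and makes clear that the decorrelation is not an extra condition but a consequence of the same factorization. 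The one cosmetic discrepancy is that, by keeping $w$ outside the transform, your $DM_k^+$ equals $w$ times a Euclidean distance in the whitened space rather than a Euclidean distance outright; your cancellation argument in $r_k$ handles this, but you could match the theorem's wording exactly by absorbing the scalar into the transform (use $w\mathbf{W}$), whose transformed covariance is then $w^2\mathbf{I}_M$ --- still diagonal, and precisely the paper's conclusion.
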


\begin{proof}
Consider the Cholesky factorization~\cite{Golub2013} of $\mathbf{\Sigma}_{\mathbf{U}}$ given by
\begin{equation}
\label{eq:chol_fact}
\mathbf{\Sigma}_{\mathbf{U}} = \mathbf{F}\mathbf{F}^T,
\end{equation}
where $\mathbf{F}$ is an unique lower triangular matrix. In that respect, $DM_k^+$ may be written as
$$DM_k^+ = \sqrt{\left(\mathbf{u}_k-\mathbf{u}^+\right)^T \Delta^T \left(\mathbf{F}\mathbf{F}^T\right)^{-1} \Delta \left(\mathbf{u}_k-\mathbf{u}^+\right)}$$
$$= \sqrt{\left(\Delta \mathbf{u}_k-\Delta \mathbf{u}^+\right)^T \left(\mathbf{F}^T\right)^{-1}\mathbf{F}^{-1} \left(\Delta \mathbf{u}_k-\Delta \mathbf{u}^+\right)}$$
$$= \sqrt{\left(\mathbf{F}^{-1} \Delta \mathbf{u}_k-\mathbf{F}^{-1} \Delta \mathbf{u}^+\right)^T \left(\mathbf{F}^{-1} \Delta \mathbf{u}_k-\mathbf{F}^{-1} \Delta \mathbf{u}^+\right)}$$
$$= \sqrt{\left(\widetilde{\mathbf{u}}_k-\widetilde{\mathbf{u}}^+\right)^T\left(\widetilde{\mathbf{u}}_k-\widetilde{\mathbf{u}}^+\right)}, \, \, \, k=1, \ldots, K,$$
where $\widetilde{\mathbf{u}}_k=\mathbf{F}^{-1} \Delta \mathbf{u}_k$ and $\widetilde{\mathbf{u}}^+=\mathbf{F}^{-1} \Delta \mathbf{u}^+$. Therefore, $DM_k^+$ represents the Euclidean distance between the transformed data $\widetilde{\mathbf{U}}=\left(\widetilde{u}_{m,k}\right)_{M \times K}$ and the transformed positive ideal alternative $\widetilde{\mathbf{u}}^+$. The same conclusion may be achieved considering $DM_k^-$.

In order to show that the transformed data $\widetilde{\mathbf{U}}$ are uncorrelated, consider the covariance matrix
$$\mathbf{\Sigma}_{\widetilde{\mathbf{U}}}=E\left\{\left(\widetilde{\mathbf{U}} - E\left\{\widetilde{\mathbf{U}}\right\}\right)\left(\widetilde{\mathbf{U}} - E\left\{\widetilde{\mathbf{U}}\right\}\right)^T\right\}$$
$$=E\left\{\left(\mathbf{F}^{-1} \Delta \mathbf{U} - \mathbf{F}^{-1} \Delta E\left\{\mathbf{U}\right\}\right)\left(\mathbf{F}^{-1} \Delta \mathbf{U} - \mathbf{F}^{-1} \Delta E\left\{\mathbf{U}\right\}\right)^T\right\}$$
$$=E\left\{\mathbf{F}^{-1} \Delta \left(\mathbf{U} - E\left\{\mathbf{U}\right\}\right)\left(\mathbf{U} - E\left\{\mathbf{U}\right\}\right)^T \Delta^T (\mathbf{F}^{-1})^T \right\}$$
$$=\mathbf{F}^{-1} \Delta E\left\{\left(\mathbf{U} - E\left\{\mathbf{U}\right\}\right)\left(\mathbf{U} - E\left\{\mathbf{U}\right\}\right)^T\right\} \Delta^T (\mathbf{F}^{-1})^T$$
$$=\mathbf{F}^{-1} \Delta {\Sigma}_{\mathbf{U}} \Delta^T (\mathbf{F}^{-1})^T$$
$$=\mathbf{F}^{-1} w_m \mathbf{I}_M {\Sigma}_{\mathbf{U}} w_m^T \mathbf{I}_M^T (\mathbf{F}^{-1})^T$$
$$=w_m^2 \mathbf{F}^{-1} {\Sigma}_{\mathbf{U}} (\mathbf{F}^{-1})^T,$$
where $\mathbf{I}_M$ is the $M \times M$ identity matrix. Since $\mathbf{\Sigma}_{\mathbf{U}} = \mathbf{F}\mathbf{F}^T$, then $\mathbf{F}^{-1}{\Sigma}_{\mathbf{U}}(\mathbf{F}^{-1})^T = \mathbf{I}_M$. Therefore, $\mathbf{\Sigma}_{\widetilde{\mathbf{U}}} = w_m^2\mathbf{I}_M$, i.e. the transformed data $\mathbf{\Sigma}_{\widetilde{\mathbf{U}}}$ is uncorrelated.

\end{proof}

\subsubsection{Graphical interpretation of TOPSIS approaches}
\label{subsub:graphs}

In order to illustrate the problems that arise in TOPSIS due to the correlation between criteria, we provide in this section a graphical interpretation of both TOPSIS and TOPSIS-M approaches. For instance, let us consider a MCDM problem with 200 alternatives and 2 observed dependent criteria, whose evaluations are generated under the light of Equation~\eqref{eq:mix_prop}. We randomly generate the latent criteria according to a uniform distribution in the range $[0,1]$ and apply a mixing matrix given by
$$\mathbf{A}=\left[\begin{array}{cc}
1.00 & 0.70 \\
-0.25 & 1.00 \\
\end{array} \right].$$
Moreover, we consider a weight vector given by $\mathbf{w}=[0.5, 0.5]$.
 
If one applies the TOPSIS in the latent criteria, then one finds the target\footnote{In this paper, we refer to ``target'' PIA and NIA as the ideal solutions that are obtained considering the independent latent criteria. Conversely, we refer to ``target'' ranking as the one provided by TOPSIS applied on the independent latent criteria.} PIA and NIA illustrated in Figure~\ref{fig:gte}. However, if we apply this approach in the observed decision data (mixed data according to~\eqref{eq:mix_prop}), one finds both ideal alternatives illustrated in Figure~\ref{fig:gtem}. It is easy to note that TOPSIS applied on the mixed data does not lead to the desirable PIA and NIA. As a consequence, one does not calculate the correct distance measures, which leads to a ranking of alternatives different from the target one. In other words, the mixing process introduces some bias in the calculation of the PIA and NIA.

\begin{figure}[ht]
\centering
\subfloat[TOPSIS in latent criteria.]{\includegraphics[width=2.6in]{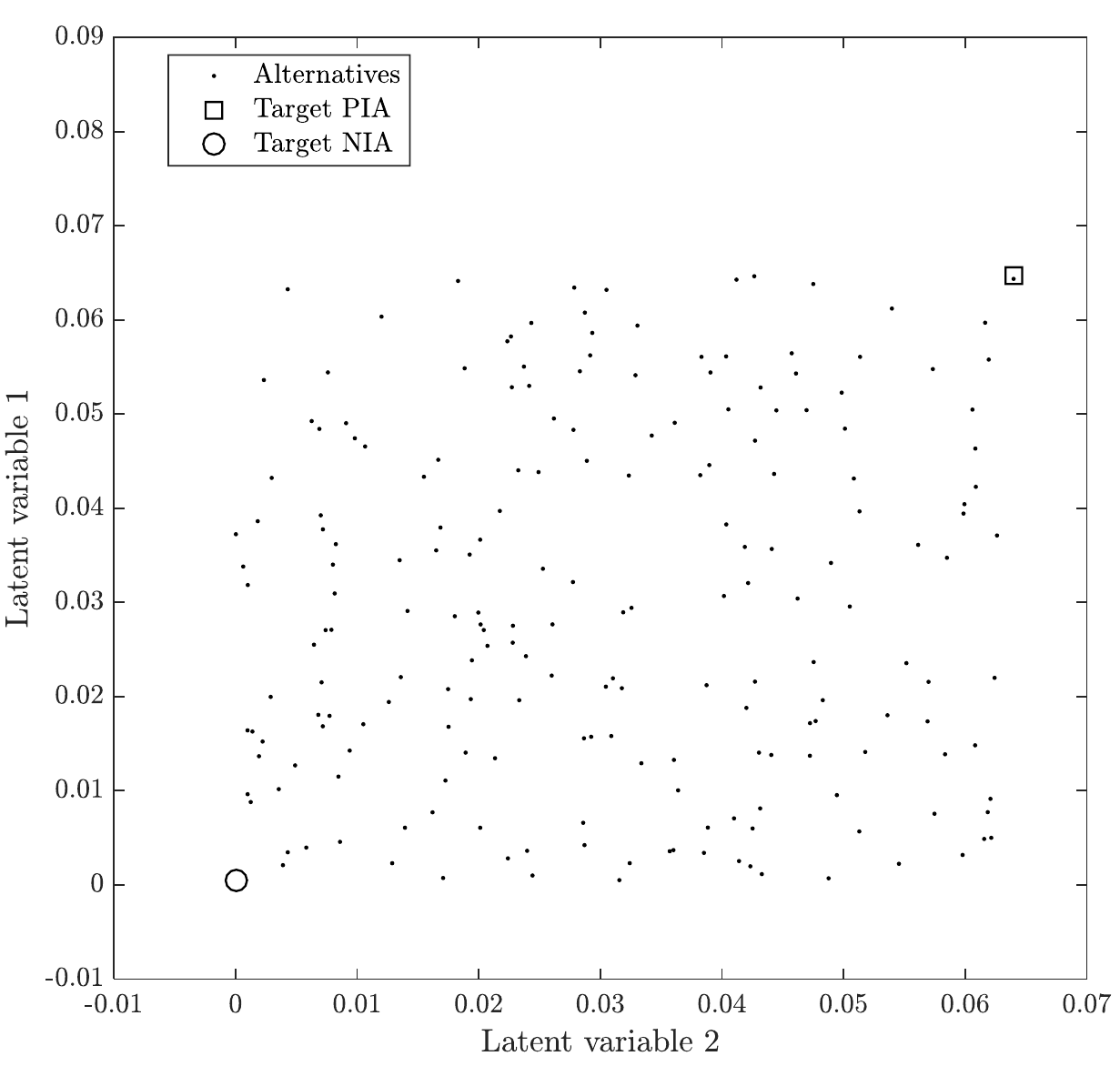}
\label{fig:gte}}
\hfil
\subfloat[TOPSIS in mixed data.]{\includegraphics[width=2.6in]{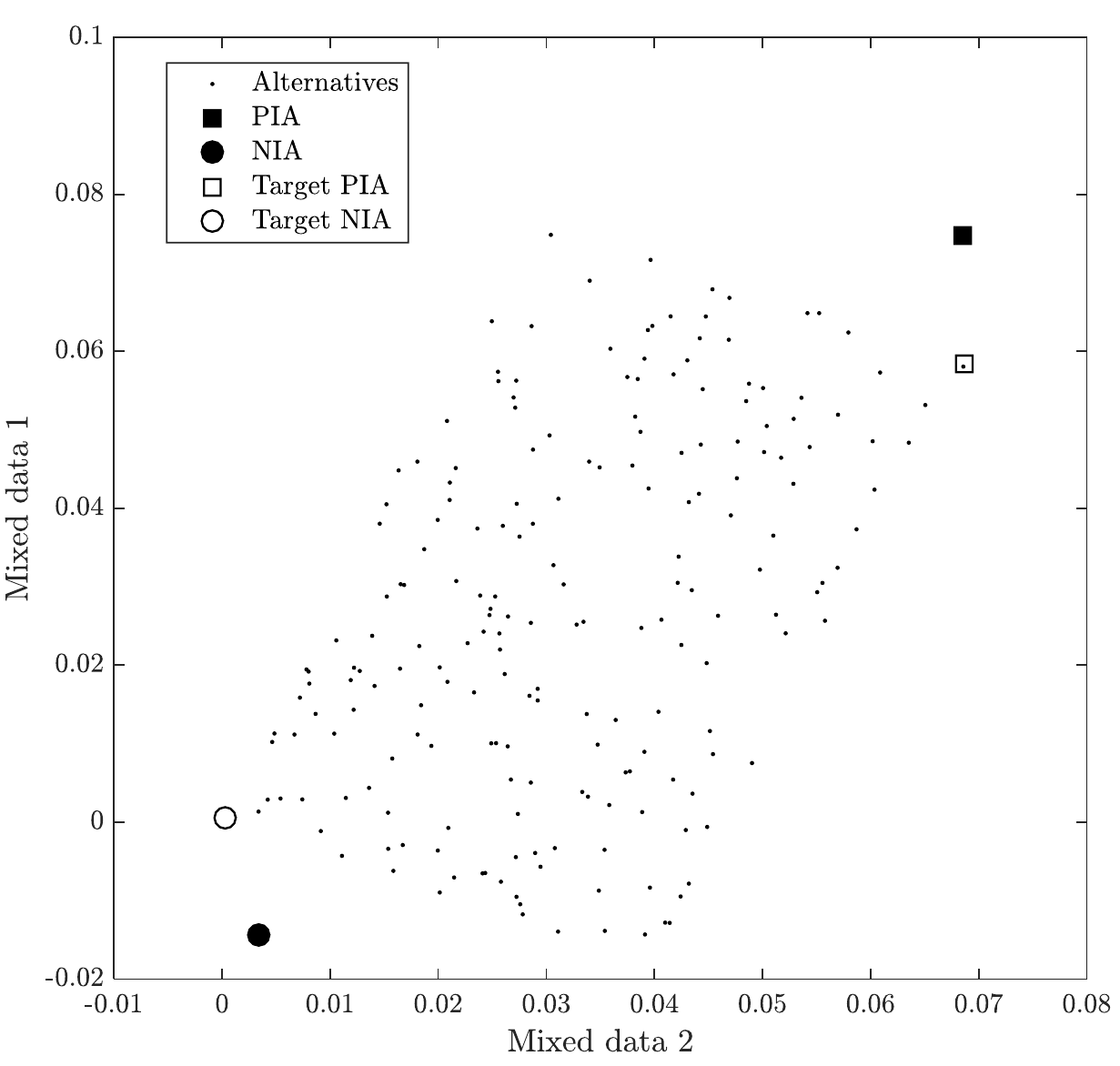}
\label{fig:gtem}}
\hfil
\subfloat[TOPSIS-M in uncorrelated data.]{\includegraphics[width=2.6in]{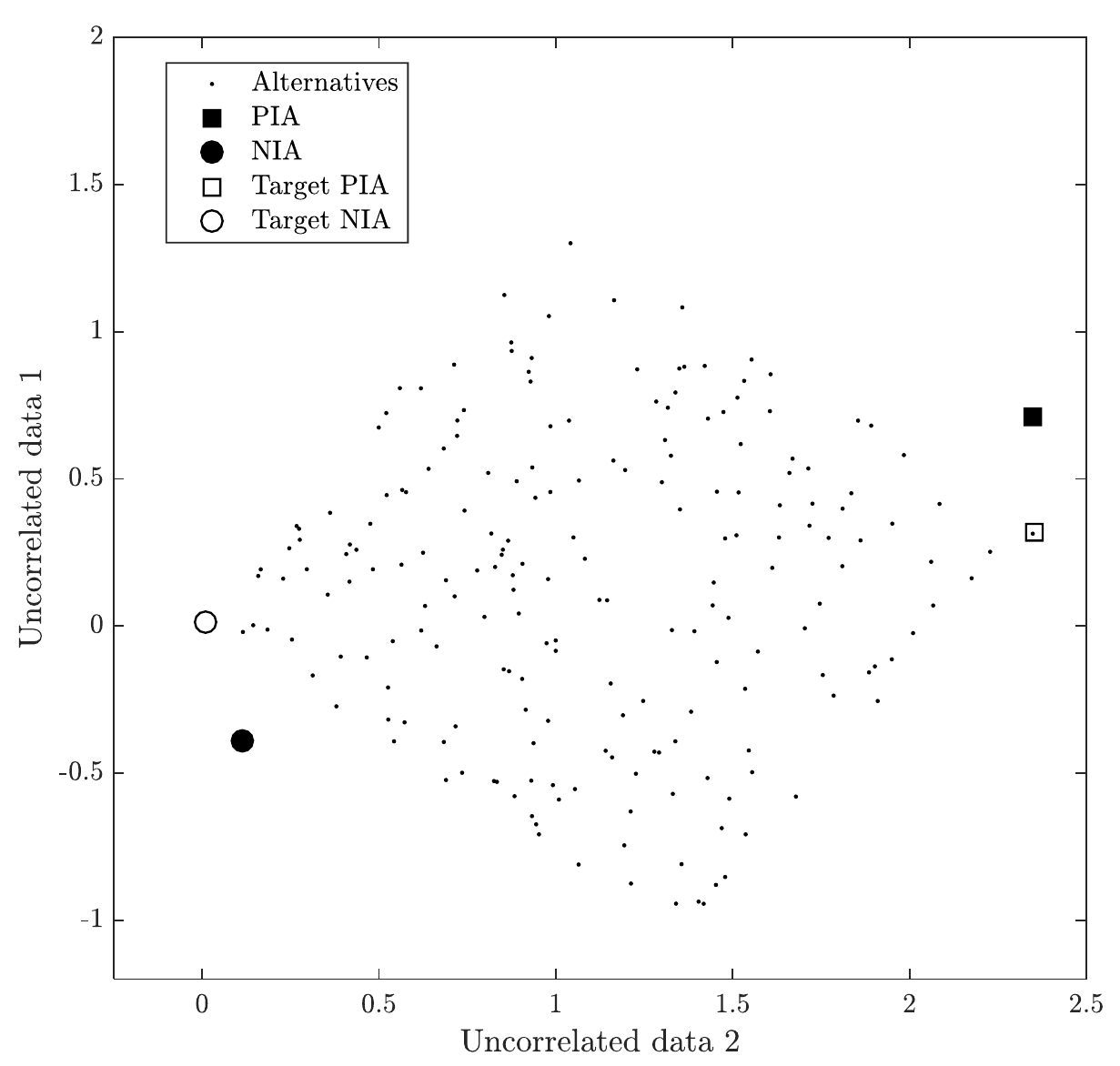}
\label{fig:gtm}}
\caption{Graphical interpretation of TOPSIS and TOPSIS-M approaches.}
\label{fig:gita}
\end{figure}

As discussed in Section~\ref{sec:topsis}, the TOPSIS-M was developed to deal with dependence among criteria in MCDM problems. The application of this approach, which decorrelates the decision data, provides both PIA and NIA illustrated in Figure~\ref{fig:gtm}. Although the data are not correlated anymore, both ideal alternatives are also far from the target ones. This is due to fact that in TOPSIS-M both PIA and NIA ($\widetilde{\mathbf{u}}^+$ and $\widetilde{\mathbf{u}}^-$, respectively) are derived from the transformation of the ideal solutions ($\mathbf{u}^+$ and $\mathbf{u}^-$, respectively) obtained in the mixed data. Therefore, if PIA and NIA in mixed data are not the target ones, the application of TOPSIS-M will not lead to the target ideal solutions.

The aforementioned results point out that the covariance information among criteria and the transformation procedure conducted by TOPSIS-M may not be enough to mitigate the biased effect provided by dependent criteria. Since decorrelation does not imply in independence~\cite{Papoulis2002}, a method that aims at retrieving a set of independent data may lead to better results. We discuss a possible approach in the next section.

\subsection{Blind source separation}
\label{sec:bss}

In summary, blind source separation problems~\cite{Comon2010} consist in retrieving a set of signal sources $\mathbf{s}(k)=[s_1(k), s_2(k), \ldots, s_N(k)]$ given a set of mixtures $\mathbf{x}(k)=[x_1(k), x_2(k), \ldots, x_M(k)]$ of these sources, which is obtained (in the instantaneous linear case) by
\begin{equation}
\label{eq:mix}
\mathbf{x}(k) = \mathbf{A}\mathbf{s}(k) + \mathbf{g}(k),
\end{equation}
where $\mathbf{A} = \left( a_{m,n} \right) \in \mathbb{R}^{M \times N}$ is the mixing matrix and $\mathbf{g}(k)=[g_1(k), g_2(k), \ldots, g_M(k)]$ models an additive white Gaussian noise. The problem is called ``blind'' since both $\mathbf{A}$ and $\mathbf{s}(k)$ are unknown. Therefore, based only in the set of mixtures $\mathbf{x}(k)$ and in a prior information about the signal sources (statistical independence, for example), the aim is to adjust a separating matrix $\mathbf{B} = \left( b_{n,m} \right) \in \mathbb{R}^{N \times M}$ that provides a set of estimates $\mathbf{y}(k)=[y_1(k), y_2(k), \ldots, y_N(k)]$, given by
\begin{equation}
\label{eq:demix}
\mathbf{y}(k) = \mathbf{B}\mathbf{x}(k),
\end{equation}
which should be as close as possible from $\mathbf{s}(k)$. Figure~\ref{fig:problem_bss} illustrates this procedure. Ideally, we expect $\mathbf{B}$ to be equal to the inverse of $\mathbf{A}$, i.e. $\mathbf{B}=\mathbf{A}^{-1}$. However, one may achieve estimates that are different from the signal sources by permutation ($y_n(k) \approx s_{n'}(k)$, for $n \neq n'$) and/or scaling ambiguities ($y_n(k) \approx \lambda s_{n'}(k)$, for a scalar $\lambda \in \mathbb{R}$)~\cite{Comon2010}. The procedure used to deal with these ambiguities, which is typical when applying BSS methods, will be addressed in Section~\ref{subsec:icatopsis}.

\begin{figure}[ht]
\centering
\includegraphics[height=3.8cm]{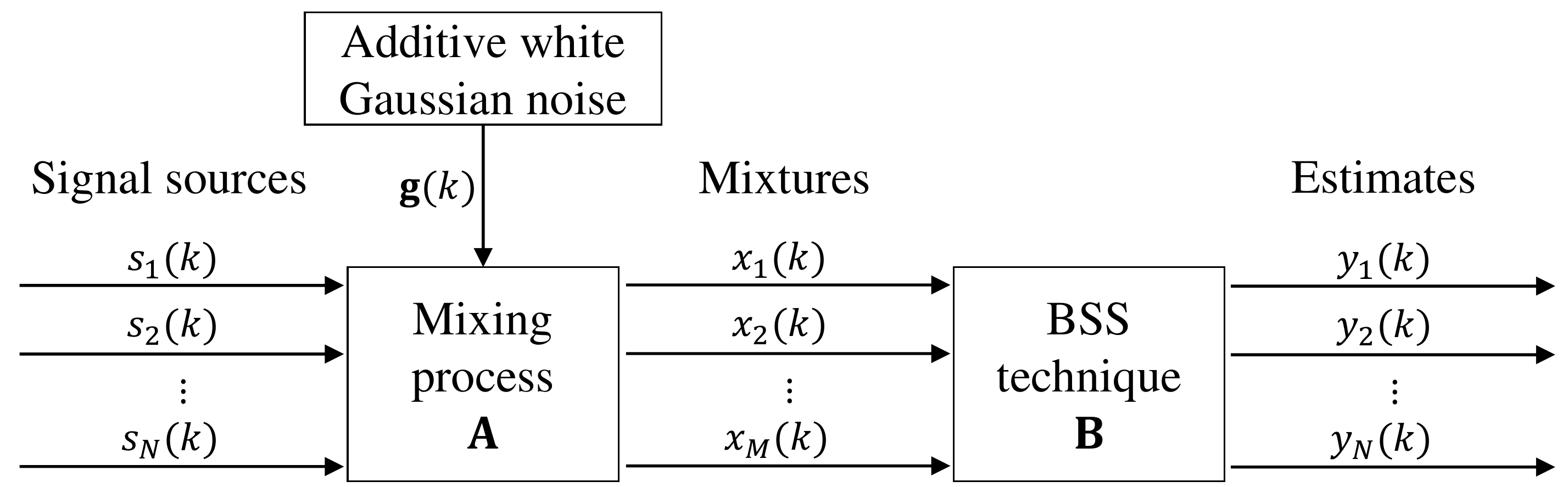}
\caption{Blind source separation problem.}
\label{fig:problem_bss}
\end{figure}

In the literature, one may find several approaches to deal with BSS problems~\cite{Comon2010}. Among them, a common one is independent component analysis (ICA), which main idea is to assume that the sources can be modeled as samples of independent and identically distributed (i.i.d.) and non-Gaussian random variables~\cite{Hyvarinen2001}. ICA is based on the observation that the mixtures provided by~\eqref{eq:mix} are not independent anymore, since it represents a linear combination of the sources.

In this paper, we consider two ICA algorithms to deal with the MCDM problem, called FastICA and JADE. FastICA performs an optimization procedure which separation criterion can be seen as a simplified route to achieve independence~\cite{Hyvarinen2001}. An example is the kurtosis, defined by
\begin{equation}
kurt(y) = E\left\{y^4 \right\} - 3\left(E\left\{y^2 \right\}\right),
\end{equation}
which represents the fourth-order moment of a random variable $y$. For further details about FastICA algorithm, see~\cite{Hyvarinen2001}.

The second ICA technique considered in this paper, called JADE (Joint Approximate Diagonalization of Eigenmatrices)~\cite{Cardoso1993}, considers the diagonalization of a set of forth order cumulant matrices associated with the retrieved sources to achieve independence among the retrieved signals. It is worth mentioning that, in order to apply JADE algorithm (as well as FastICA), a previous step is necessary, which comprises a decorrelation of the mixed signals~\cite{Hyvarinen2001}. Further details about JADE algorithm can be found in~\cite{Cardoso1993}. 

\section{Proposed approaches}
\label{sec:proposal}

As already mentioned in this paper, we propose two different approaches to deal with dependent criteria in MCDM problems, both based on BSS techniques. They are presented in the sequel.

\subsection{ICA-TOPSIS}
\label{subsec:icatopsis}

The first approach considered in this paper, called ICA-TOPSIS~\cite{Pelegrina2018}, comprises three steps, as described in Algorithm~\ref{alg:ica_topsis}. In the first one, we assume that the decision data $\mathbf{V}$ is composed by a mixture of latent criteria $\mathbf{L}$ according to the mixing process described in~\eqref{eq:mix_prop}. Therefore, we formulate a BSS problem and apply an ICA technique to adjust the separating matrix $\mathbf{B}$, which gives us the estimates $\mathbf{\hat{L}}$ (alternative representation of data $\mathbf{V}$) according to Equation~\eqref{eq:demix_mcdm}\footnote{If we consider the separating process described in~\eqref{eq:demix}, $\mathbf{y}(k)$ and $\mathbf{x}(k)$ represent, respectively, $\mathbf{\hat{L}}$ and $\mathbf{V}$. Moreover, we consider the determined case in BSS, in which the number of observed criteria is equal to the number of independent latent criteria ($M=N$).}.
 
\begin{algorithm}
%\setstretch{1.00}
%\SetAlgoLined
    \caption{ICA-TOPSIS}
    \label{alg:ica_topsis}
				\textbf{Input:} Decision data $\mathbf{V}$ and set of weights $\mathbf{w}$. \\
				\textbf{Output:} Closeness measure $\mathbf{r} = \left[r_1, r_2, \ldots, r_K \right]$. \\
				\textit{Step 1: Latent criteria estimation}. Based on $\mathbf{V}$, apply an ICA technique aiming at retrieving the latent criteria $\mathbf{L}$. \\
				\textit{Step 2: Ambiguities mitigation}. Perform adjustments in both estimated mixing matrix $\mathbf{\hat{A}}$ and estimated latent criteria $\mathbf{\hat{L}}$ derived from the ICA technique in order to avoid permutation and/or scaling ambiguities. \\
				\textit{Step 3: TOPSIS and ranking determination}. Apply theTOPSIS approach (Algorithm~\ref{alg:topsis_e}), using the adjusted estimated latent criteria $\mathbf{\hat{L}}^{Adj_{c}}$ and the set of weights $\mathbf{w}$ as inputs, and determine the ranking of alternatives.
\end{algorithm}

As mentioned in Section~\ref{sec:bss}, BSS methods may suffer from permutation and/or scaling ambiguities. In this context, in the second step, we perform adjustments in the estimated latent criteria $\mathbf{\hat{L}}$ in order to avoid these ambiguities. In that respect, we assume that (i) the latent criteria $\mathbf{L}$ are in similar range and (ii) the diagonal elements in the mixing matrix $\mathbf{A}$ are positive and greater (in absolute value) than the off-diagonal elements in the same row, which is equivalent to consider that each latent data has a positive majority influence in each observed criterion. If we consider the example illustrated in Figure~\ref{fig:problem_example}, competences in social sciences, natural sciences and mathematics (independent latent criteria) should have a positive majority influence in subjects such as human resources, physics and linear algebra, respectively. Therefore, based on the mixing matrix $\mathbf{\hat{A}}=\mathbf{B}^{-1}$ derived by an ICA technique, the adjustments performed in $\mathbf{\hat{L}}$ are the following ones:
\begin{enumerate}
\item For the first row in $\mathbf{\hat{A}}$, we find the column $q$ that contains the greater absolute value. This value represents the influence of the estimated latent criterion $\hat{l}_q$ in the mixed process that results in the set of evaluations $v_{1,k}$ ($k=1, 2, \ldots, K$) with respect to the first criterion (first column of $\mathbf{V}$). Therefore, in order to place $\hat{l}_q$ as the first estimated latent criterion (without invaliding the relation $\mathbf{V}=\mathbf{\hat{A}}\mathbf{\hat{L}}$), we permute the first and the $q$ columns of $\mathbf{\hat{A}}$ and also the first and the $q$ estimates. This procedure is repeated for all rows $\mathbf{\hat{A}}$, which leads to both estimated mixing matrix ($\mathbf{\hat{A}}^{Adj_{p}}$) and estimated latent criteria ($\mathbf{\hat{L}}^{Adj_{p}}$) partially adjusted, and mitigates the permutation ambiguity provided by the BSS method.
\item For each column in $\mathbf{\hat{A}}^{Adj_{p}}$, we verify the signal of the diagonal element. If the diagonal element $q'$ is negative (negative contribution of the associated estimated latent criterion $\hat{l}_{q'}$), we multiply all the elements in the same column of $q'$ by $-1$. As a consequence, we also need to invert the signal of $\hat{l}_{q'}$, since it is necessary to ensure that $\mathbf{V}=\mathbf{\hat{A}}\mathbf{\hat{L}}$. After verifying all the diagonal elements of $\mathbf{\hat{A}}^{Adj_{p}}$ and performing the signal changes, we obtain both estimated mixing matrix ($\mathbf{\hat{A}}^{Adj_{c}}$) and estimated latent criteria ($\mathbf{\hat{L}}^{Adj_{c}}$) completely adjusted, which mitigates the signal inversion ambiguity provided by the BSS method. With respect to the scaling ambiguity provided by a positive factor or a negative factor different from $-1$, this is automatically mitigated in the normalization step of TOPSIS.
\end{enumerate}

Finally, after applying ICA and eliminating permutation and/or scaling ambiguities, the third step of the proposed approach comprises the application of TOPSIS in $\mathbf{\hat{L}}^{Adj_{c}}$ in order to derive the ranking of alternatives. Therefore, Steps 1 and 2 can be seen as a preprocessing step.

In order to illustrate the application of the ICA-TOPSIS approach, consider the example described in Section~\ref{subsub:graphs}. After applying the ICA algorithm, we obtain the estimated mixing matrix
$$\mathbf{\hat{A}}=\left[\begin{array}{cc}
-0.2158 & 0.2864 \\
-0.2888 & -0.0651 \\
\end{array} \right],$$
which is associated with the estimated latent criteria $\mathbf{\hat{L}}=[\hat{l}_1, \hat{l}_2]^T$. In order to mitigate the permutation ambiguity, the first adjustment leads to both estimated mixing matrix
$$\mathbf{\hat{A}}^{Adj_{p}}=\left[\begin{array}{cc}
0.2864 & -0.2158 \\
-0.0651 & -0.2888 \\
\end{array} \right]$$
and estimated latent criteria $\mathbf{\hat{L}}^{Adj_{p}}=[\hat{l}_2, \hat{l}_1]$ partially adjusted, which comprises the permutation of both columns of $\mathbf{\hat{A}}^{Adj_{p}}$ and estimates $\hat{l}_1$ and $\hat{l}_2$. With respect to the scaling ambiguity, the second adjustment leads to both estimated mixing matrix\footnote{One may note that this estimated mixing matrix differs from the correct one (defined on Section~\ref{subsub:graphs}) by a factor of 3.5, approximately, which introduces a scaling ambiguity provided by a positive factor.}
$$\mathbf{\hat{A}}^{Adj_{c}}=\left[\begin{array}{cc}
0.2864 & 0.2158 \\
-0.0651 & 0.2888 \\
\end{array} \right]$$
and estimated latent criteria $\mathbf{\hat{L}}^{Adj_{c}}=[\hat{l}_2, -\hat{l}_1]$ completely adjusted, which comprises the signal inversion of both second column of $\mathbf{\hat{A}}^{Adj_{c}}$ and second estimates of $\mathbf{\hat{L}}^{Adj_{c}}$. Therefore, based on $\mathbf{\hat{L}}^{Adj_{c}}$, we apply the TOPSIS approach, which leads to the PIA and NIA illustrated in Figure~\ref{fig:git}. One may note that the ICA algorithm performs a good estimation of the latent criteria, which also provides both PIA and NIA close to the target ones.

\begin{figure}[ht]
\centering
\subfloat[TOPSIS in latent criteria.]{\includegraphics[width=2.6in]{gra_topsis_e.pdf}
\label{fig:gte2}}
\hfil
\subfloat[ICA-TOPSIS in estimated latent criteria.]{\includegraphics[width=2.6in]{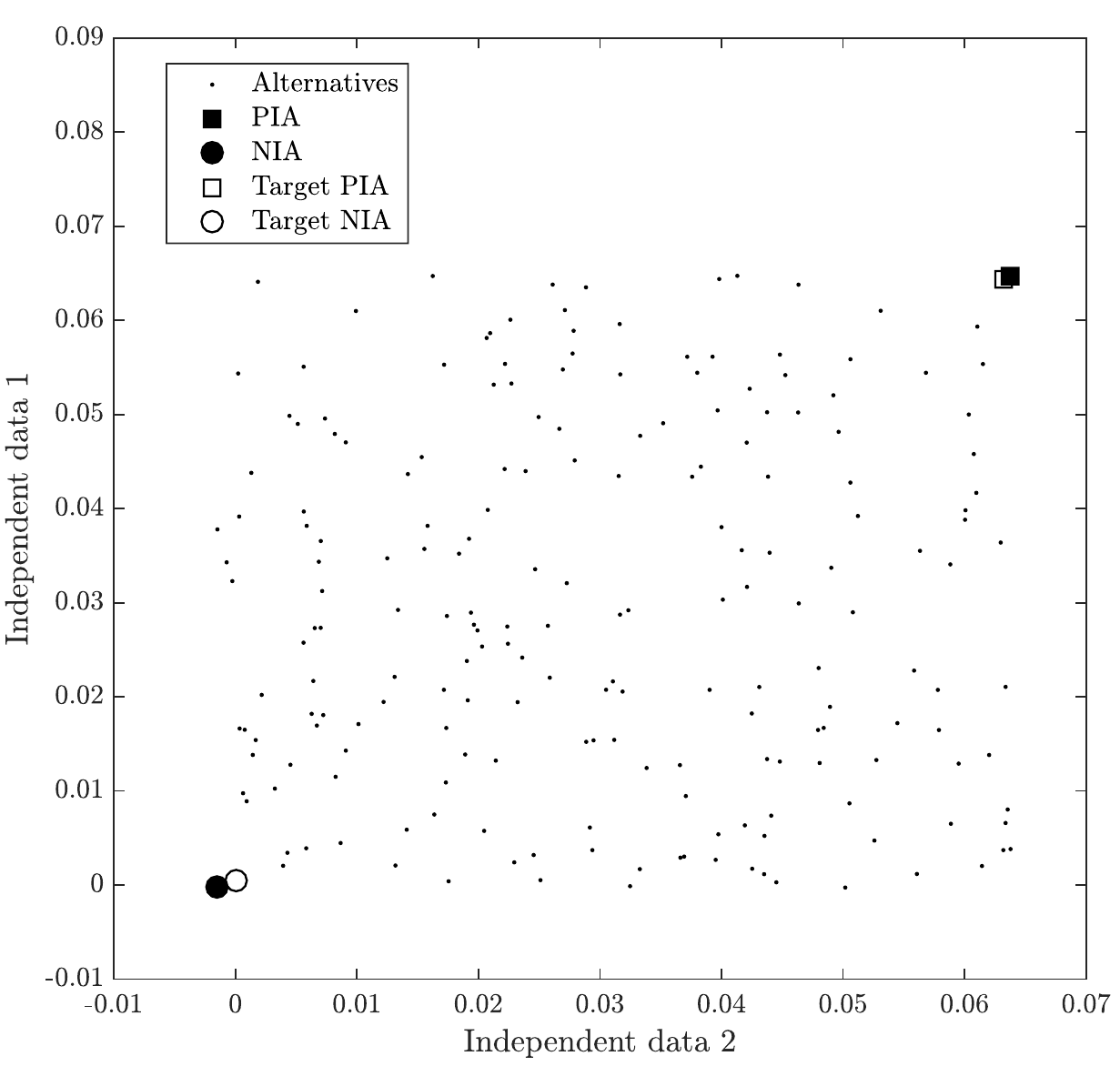}
\label{fig:git}}
\hfil
\subfloat[ICA-TOPSIS-M in uncorrelated data.]{\includegraphics[width=2.6in]{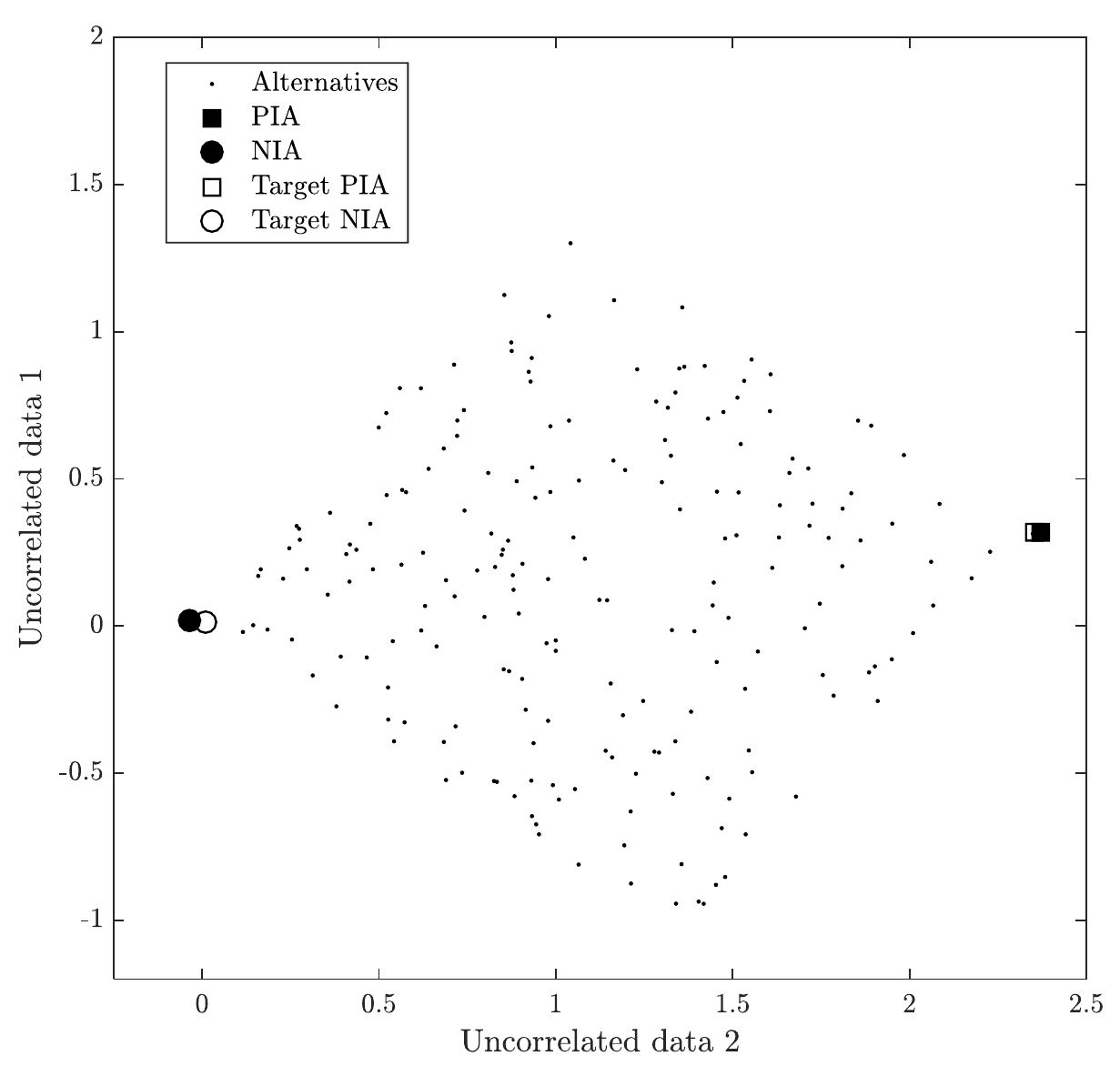}
\label{fig:gitm}}
\caption{Graphical interpretation of TOPSIS, ICA-TOPSIS and ICA-TOPSIS-M approaches.}
\label{fig:gita2}
\end{figure}

\subsection{ICA-TOPSIS-M}
\label{subsec:icatopsism}

The second approach considered in this paper, called ICA-TOPSIS-M, combines an ICA technique and a modified version of the TOPSIS-M. It comprises five steps, as described in Algorithm~\ref{alg:ica_topsis_m}. The first and the second steps are identical to the ICA-TOPSIS approach. However, the third step comprises the determination of both PIA and NIA directly from the estimated latent criteria completely adjusted $\mathbf{\hat{L}}^{Adj_{c}}$, without any normalization. In the fourth step, we apply the estimated mixing matrix completely adjusted $\mathbf{\hat{A}}^{Adj_{c}}$ on both PIA and NIA, which leads to the transformation of these ideal alternatives into the mixed space (the same as the mixed decision data $\mathbf{V}$). Since the transformed PIA and NIA were obtained based on the estimated latent criteria, we expect that both ideal alternatives are placed close to the target ones in the mixed space.

\begin{algorithm}
%\setstretch{1.00}
%\SetAlgoLined
    \caption{ICA-TOPSIS-M}
    \label{alg:ica_topsis_m}
				\textbf{Input:} Decision data $\mathbf{V}$ and set of weights $\mathbf{w}$. \\
				\textbf{Output:} Closeness measure $\mathbf{r} = \left[r_1, r_2, \ldots, r_K \right]$. \\
				\textit{Step 1: Latent criteria estimation}. Based on $\mathbf{V}$, apply an ICA technique aiming at retrieving the latent criteria $\mathbf{L}$. \\
				\textit{Step 2: Ambiguities mitigation}. Perform adjustments in both estimated mixing matrix $\mathbf{\hat{A}}$ and estimated latent criteria $\mathbf{\hat{L}}$ derived from the ICA technique in order to avoid permutation and/or scaling ambiguities. \\
				\textit{Step 3: Ideal alternatives determination}. Determine both positive and negative ideal alternatives (PIA and NIA, respectively) directly from $\mathbf{\hat{L}}^{Adj_{c}}$:
				\begin{equation*}
				PIA = \mathbf{\hat{l}}^+ = \left\{\hat{l}_{1}^+, \hat{l}_{2}^+, \ldots, \hat{l}_{N}^+\right\} \text{and } NIA = \mathbf{\hat{l}}^- = \left\{\hat{l}_{1}^-, \hat{l}_{2}^-, \ldots, \hat{l}_{N}^-\right\},
				\end{equation*}
				where $\hat{l}_{n}^+=\max \{\hat{l}_{n,k} | 1 \leq k \leq K \}$ and $\hat{l}_{n}^-= \min \{\hat{l}_{n,k} | 1 \leq k \leq K \}$, $n=1, \ldots, N$. \\
				\textit{Step 4: Ideal alternatives transformation}. Apply $\mathbf{\hat{A}}^{Adj_{c}}$ on both PIA and NIA in order to transform these ideal alternatives into the mixed data space. \\
				\textit{Step 5: Modified TOPSIS-M and ranking determination}. Based on the transformed PIA and NIA calculated in Step 4, apply the modified TOPSIS-M approach (Algorithm~\ref{alg:topsis_m} without the Step 3) and determine the ranking of alternatives.
\end{algorithm}

Finally, the last step comprises the application of a modified version of TOPSIS-M approach to determine the ranking of alternatives. For instance, instead of deriving PIA and NIA in Step 3 of Algorithm~\ref{alg:topsis_m}, we use as input the transformed PIA and NIA calculated in the Step 4 of the ICA-TOPSIS-M approach. Therefore, we expect that these ideal alternatives be placed close to the target ones, which leads to an improvement in the distances calculation of TOPSIS-M approach (Step 3 of Algorithm~\ref{alg:topsis_m}) and, therefore, in the ranking determination.

Considering the example described in Section~\ref{subsub:graphs} and applying ICA-TOPSIS-M approach, with the transformed PIA and NIA derived from $\mathbf{\hat{L}}^{Adj_{c}}$, we obtain the ideal alternatives illustrated in Figure~\ref{fig:gitm}. Since these ideal alternatives are derived directly from the estimated latent criteria instead of the observed mixed data, they are closer to the target PIA and NIA compared with the ones obtained by the TOPSIS-M approach, illustrated in Figure~\ref{fig:gtm}.

\section{Experiments and results}
\label{sec:exper}

In to order to test the proposed approaches in MCDM problems with dependent criteria, in this section, we perform experiments in both synthetic and real data. They are described in the sequel.

\subsection{Experiments on synthetic data}
\label{subsec:exper_synt}

In order to verify the robustness of the proposed approaches compared to the traditional TOPSIS and TOPSIS-M methods, we performed numerical experiments on synthetic data, whose latent criteria were randomly generated according to a uniform distribution in the range $[0,1]$. This scenario may represent the example described in Figure~\ref{fig:problem_example}, in which the latent criteria and mixed decision data correspond, respectively, to the competences of students in different subjects (to be estimated) and their grades. Moreover, we considered that all criteria have the same importance on the decision data, i.e. $w_{m} = w_{m'}$ for all $m$ and $m'$.

\subsubsection{TOPSIS-M performance for different mixing matrices}
\label{subsubsuc:exp1}

The first experiment comprises the analysis of TOPSIS-M approach to deal with the problem addressed in this paper for different values of the off-diagonal elements of the mixing matrix. Therefore, based on two latent criteria randomly generated (with 100 alternatives), we generate the mixed decision data, as described in Equation~\eqref{eq:mix_prop} without additive noise, using the mixing matrix
$$\mathbf{A}=\left[\begin{array}{cc}
1 & \alpha \\
\beta & 1 \\
\end{array} \right],$$
where the off-diagonal elements $\alpha$ and $\beta$ assume values in the range $[-0.75,0.75]$. In order to verify the robustness of TOPSIS-M approach to deal with different mixture intensities, we considered as a performance index the Kendall tau coefficient~\cite{Kendall1938}, defined by
\begin{equation}
\tau=\frac{J_{conc}-J_{disc}}{K(K-1)/2},
\end{equation}
where $J_{conc}$ and $J_{disc}$ are the number of pairwise agreements and pairwise disagreements between two rankings, respectively, $K$ is the number of alternatives and $K(K-1)/2$ is the total number of pairwise combinations. $\tau=1$ and $\tau=-1$ indicate that the two ranking are the same and the reverse of the other, respectively. If $\tau \approx 0$, the two rankings are independent. Therefore, in our experiments, the larger $\tau$ the better, since it indicates that the retrieved ranking is close to the target one provided by the application of TOPSIS on the latent criteria.

After applying TOPSIS-M approach on mixed decision data, the obtained Kendall tau coefficients (averaged over 500 simulations) are shown in Figure~\ref{fig:rob_mahal}. One may note that TOPSIS-M approach achieves better results when both $\alpha$ and $\beta$ are positive or both $\alpha$ and $\beta$ are negative and equal, i.e. where there is no negative contribution of only one latent criterion in the mixing process. However, if $\alpha$ and/or $\beta$ are negative (and different), $\tau$ decreases, indicating that there are a great number of disagreements between the ranking obtained by the TOPSIS-M approach applied on the mixed decision data and the target one provided by TOPSIS applied on the latent criteria. This is a consequence of the fact that both PIA and NIA are far from the target values.

\begin{figure}[ht]
\centering
\includegraphics[height=8.5cm]{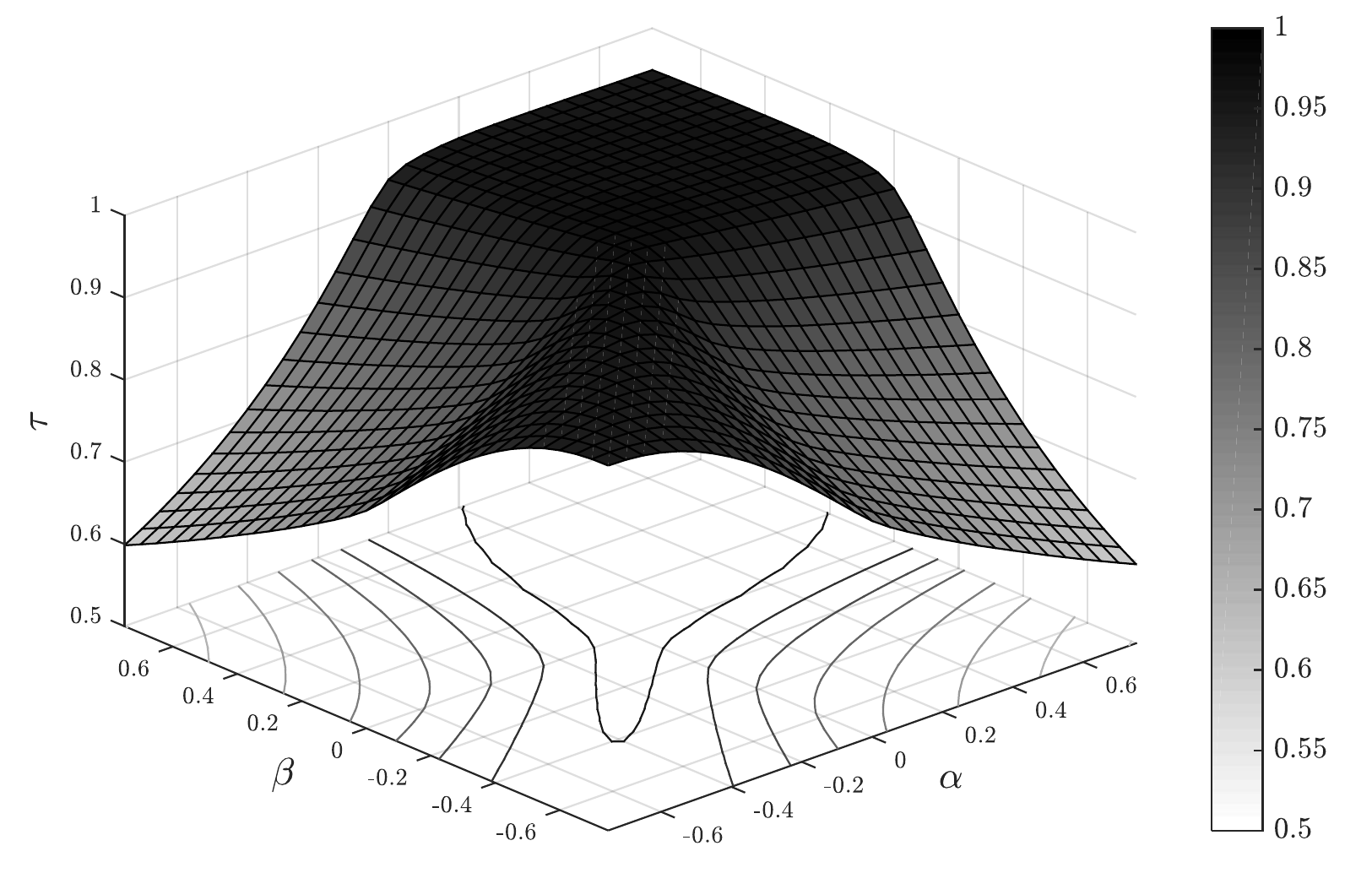}
\caption{Robustness of TOPSIS-M for different degrees of mixture.}
\label{fig:rob_mahal}
\end{figure}

\subsubsection{Comparison for different degrees of noise}
\label{subsubsec:exp2}

In this second experiment, we compare the performance of TOPSIS, TOPSIS-M and the proposed ICA-TOPSIS and ICA-TOPSIS-M approaches. They are applied to deal with a MCDM problem with $K=100$ alternatives, $M=2$ criteria and for different degrees of noise $\mathbf{G}$ in the mixing process~\eqref{eq:mix_prop}. This analysis is important since it points out how robust the approaches are in situations in which the generative model adopted for the observed data is not perfect (which is often the case in real situations). In our experiment, we provide results for different Signal-to-Noise Ratio (SNR, in dB), given by
\begin{equation}
SNR = 10 \log_{10} \frac{P_{signal}^2}{P_{noise}^2},
\end{equation}
where $P_{signal}^2$ and $P_{noise}^2$ are, respectively, the signal ($\mathbf{AL}$ in the mixing process~\eqref{eq:mix_prop}) power and the noise ($\mathbf{G}$) power, in the range $(0,50]$ dB. Therefore, lower values of SNR indicates lower degrees of noise in the mixing process.

Figure~\ref{fig:snr2k} presents a comparison of the obtained Kendall tau coefficients (averaged over 500 simulations for each SNR value) for the considered approaches (with ICA-TOPSIS and ICA-TOPSIS-M using both FastICA and JADE algorithms). Figure~\ref{fig:snr2kit} also illustrates $\tau$ for the ``Utopic ICA-TOPSIS'', which comprises the application of ICA-TOPSIS with the ideal separating matrix $\mathbf{B}=\mathbf{A}^{-1}$ and is used as benchmark for the achieved results. Similarly, in Figure~\ref{fig:snr2kitm}, we use as a benchmark the ``Utopic ICA-TOPSIS-M'', which comprises the application of ICA-TOPSIS-M with the transformation of the target PIA and NIA derived from the latent criteria.

\begin{figure}[ht]
\centering
\subfloat[ICA-TOPSIS approach]{\includegraphics[width=4.5in]{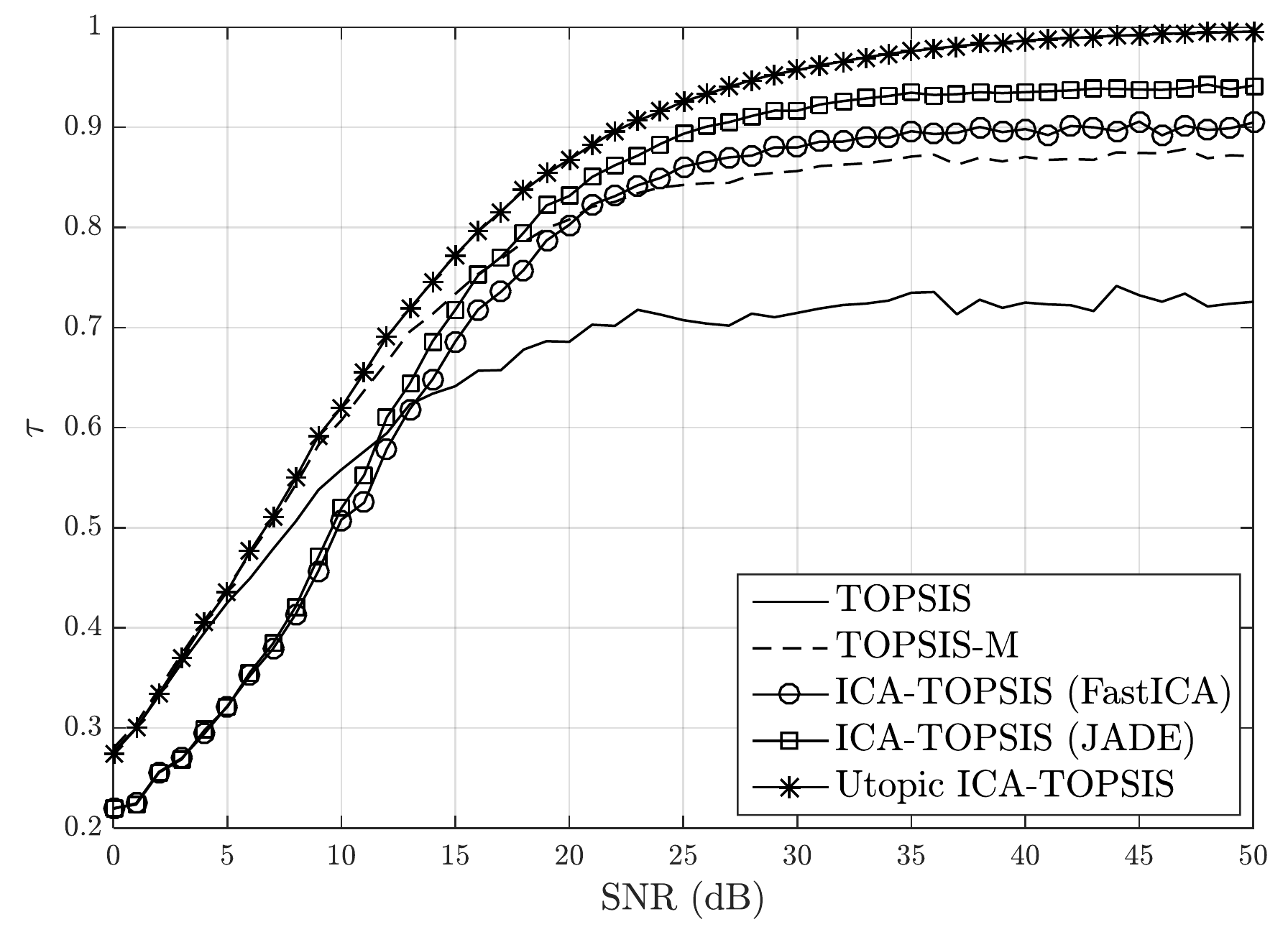}
\label{fig:snr2kit}}
\hfil
\subfloat[ICA-TOPSIS-M approach]{\includegraphics[width=4.5in]{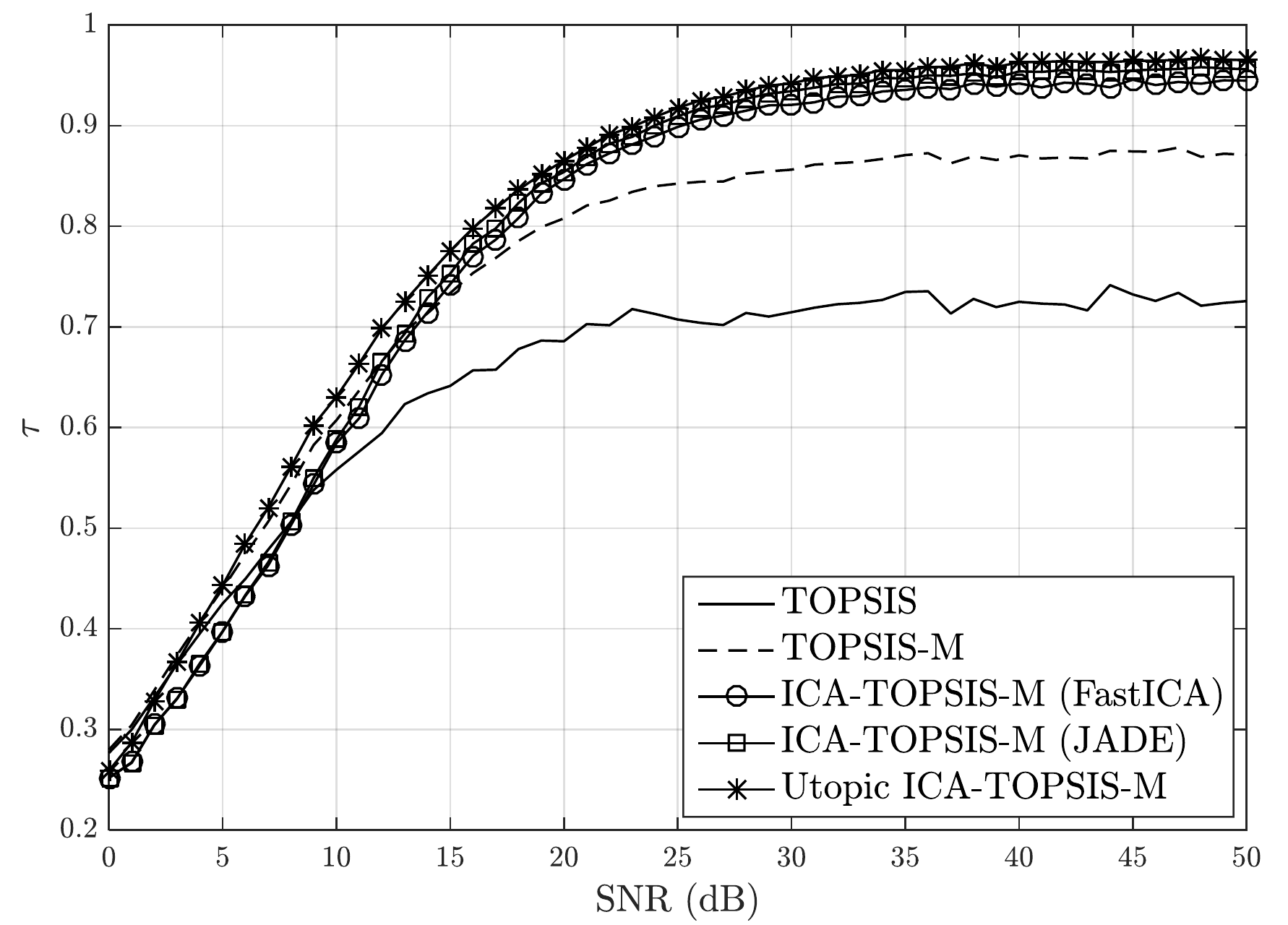}
\label{fig:snr2kitm}}
\caption{Comparison of Kendall tau coefficients for different SNR values.}
\label{fig:snr2k}
\end{figure}

Although the TOPSIS-M approach applied to the mixed decision data performed better compared to the TOPSIS, an improvement may be achieved using either ICA-TOPSIS or ICA-TOPSIS-M, specially for SNR greater than 15 dB (ICA-TOPSIS-M performed slightly better than ICA-TOPSIS). For SNR lower than 15 dB, the interference of noise is strong enough to hinder the considered approaches to retrieve a proper ranking of alternatives.

In order to further exploit the performance of the considered approaches, we calculate the Pearson's correlation coefficient between the closeness measure of the estimates ($r_k^{\mathbf{\hat{L}}^{Adj_{c}}}$) and the independent latent criteria ($r_k^{\mathbf{L}}$), given by
\begin{equation}
\rho \left(r_k^{\mathbf{\hat{L}}^{Adj_{c}}}, r_k^{\mathbf{L}} \right) = \frac{\sum_{k=1}^K ( r_k^{\mathbf{\hat{L}}^{Adj_{c}}} - \bar{r}_k^{\mathbf{\hat{L}}^{Adj_{c}}} ) ( r_k^{\mathbf{L}} - \bar{r}_k^{\mathbf{L}} )}{\sqrt{\sum_{k=1}^K ( r_k^{\mathbf{\hat{L}}^{Adj_{c}}} - \bar{r}_k^{\mathbf{\hat{L}}^{Adj_{c}}} )^2} \sqrt{\sum_{k=1}^K ( r_k^{\mathbf{L}} - \bar{r}_k^{\mathbf{L}} )^2}},
\end{equation}
where $\bar{r}_k^{\cdot} = \left(1/K \right) \sum_{k=1}^K r_k^{\cdot}$. Moreover, we also calculates the mean absolute error of the position of the first 20\% of the total number ($K$) of alternatives (the first 20 alternatives, in this case). This measure is given by:
\begin{equation}
\varepsilon = \frac{1}{0.2K}\sum_{k=1}^{0.2K}\left| o_k - k \right|,
\end{equation}
where $o_k$ is the position, in the ranking derived by the considered approaches, of the $k$-th alternative in the correct ranking. Figures~\ref{fig:snr2c} and~\ref{fig:snr2e} presents the obtained results. Similarly to the Kendall tau coefficient, one may note that both ICA-TOPSIS and ICA-TOPSIS-M approaches lead to better results, specially for SNR greater than 25 dB. One also may note that ICA-TOPSIS-M performed slightly better than ICA-TOPSIS.

\begin{figure}[ht]
\centering
\subfloat[ICA-TOPSIS approach]{\includegraphics[width=4.5in]{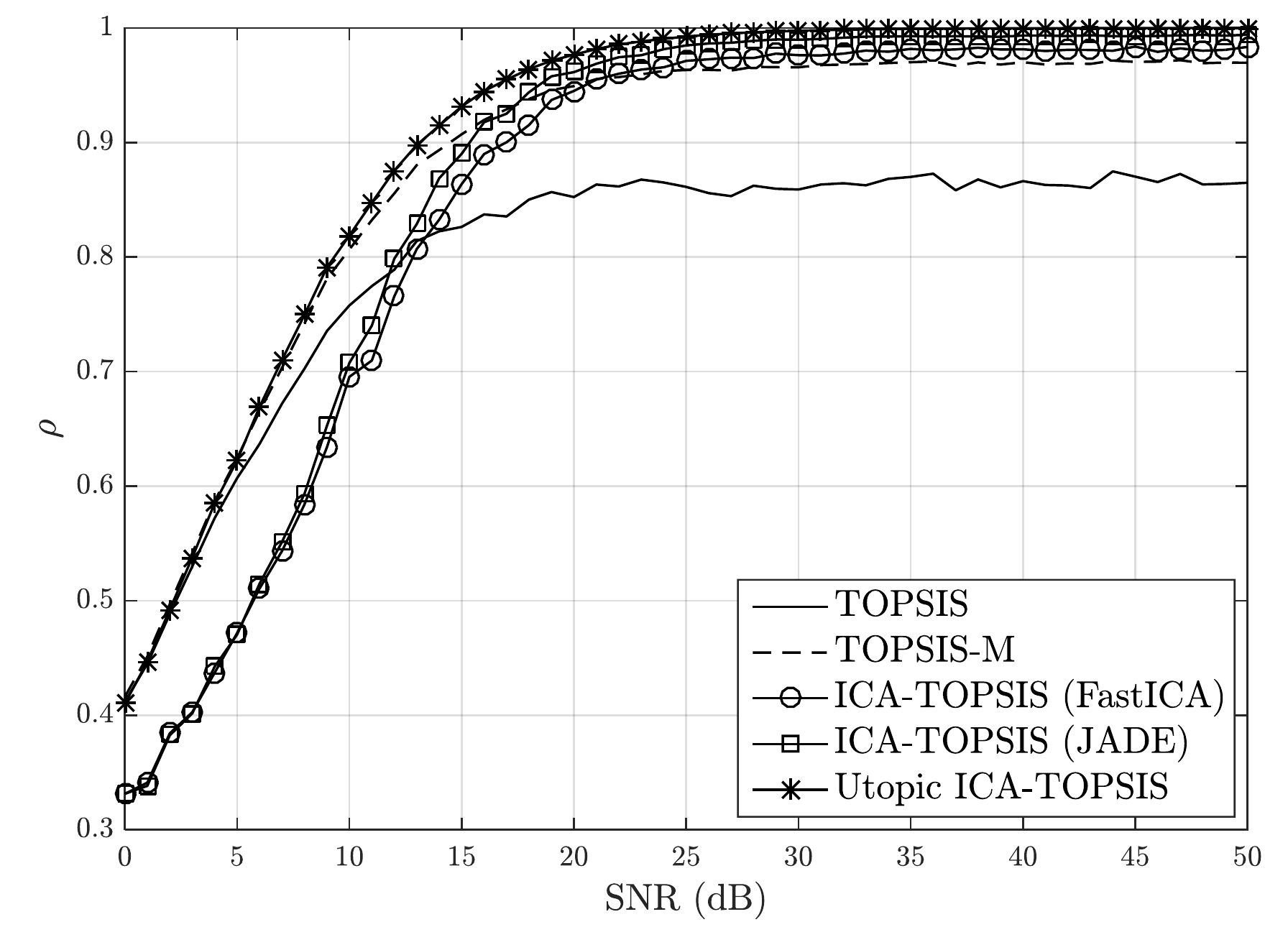}
\label{fig:snr2cit}}
\hfil
\subfloat[ICA-TOPSIS-M approach]{\includegraphics[width=4.5in]{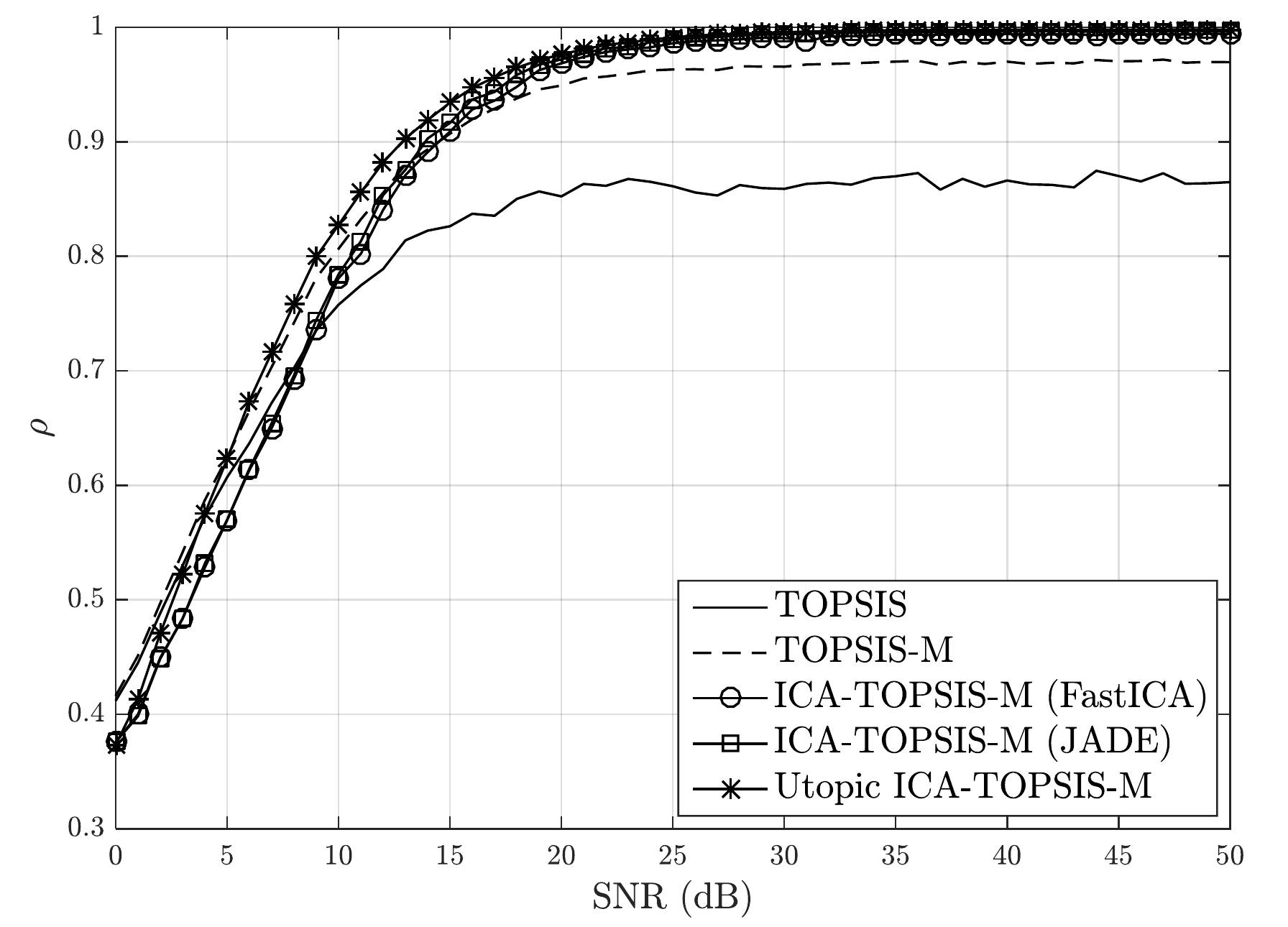}
\label{fig:snr2citm}}
\caption{Comparison of Pearson's correlation coefficients for different SNR values.}
\label{fig:snr2c}
\end{figure}

\begin{figure}[ht]
\centering
\subfloat[ICA-TOPSIS approach]{\includegraphics[width=4.5in]{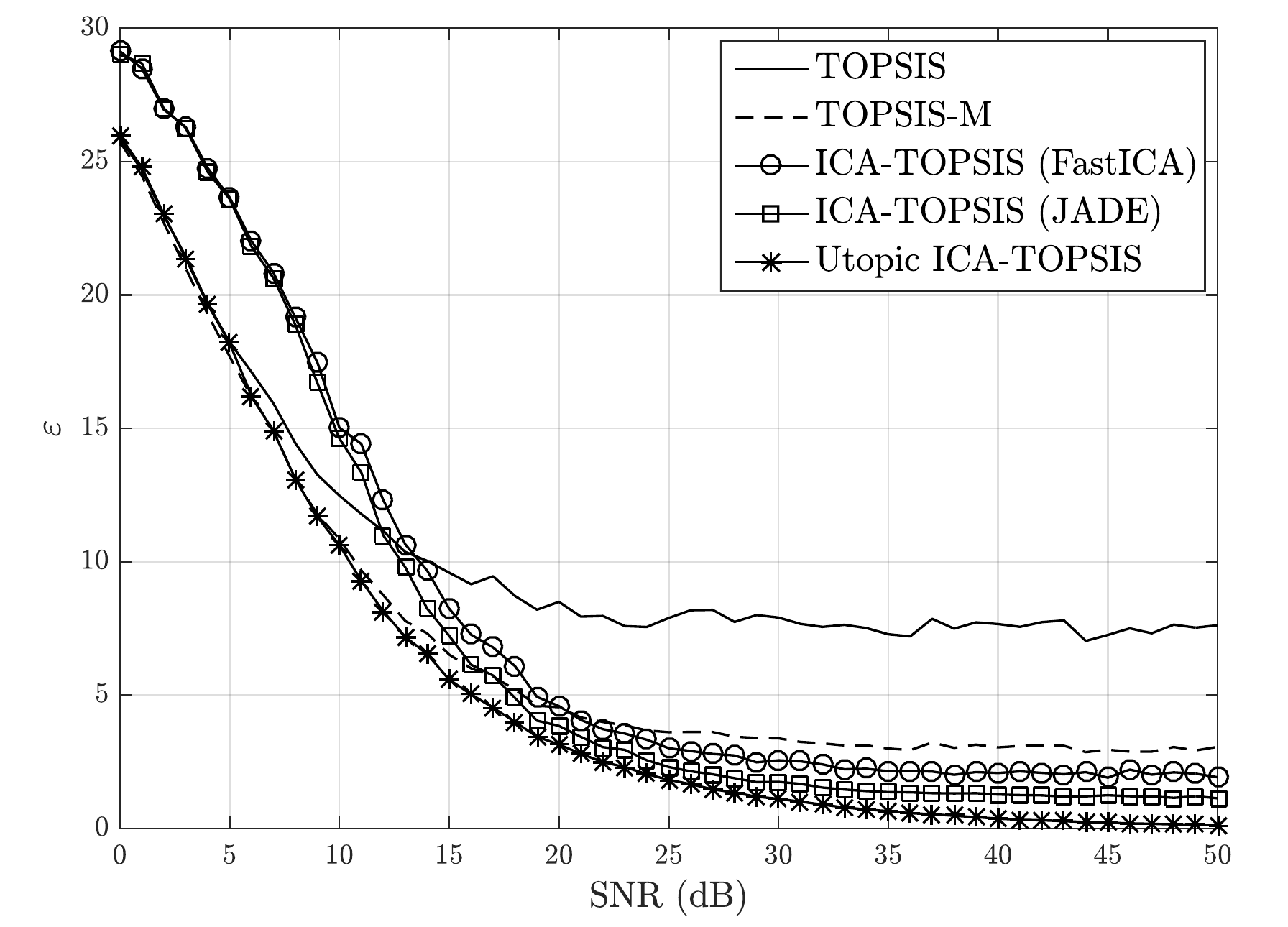}
\label{fig:snr2eit}}
\hfil
\subfloat[ICA-TOPSIS-M approach]{\includegraphics[width=4.5in]{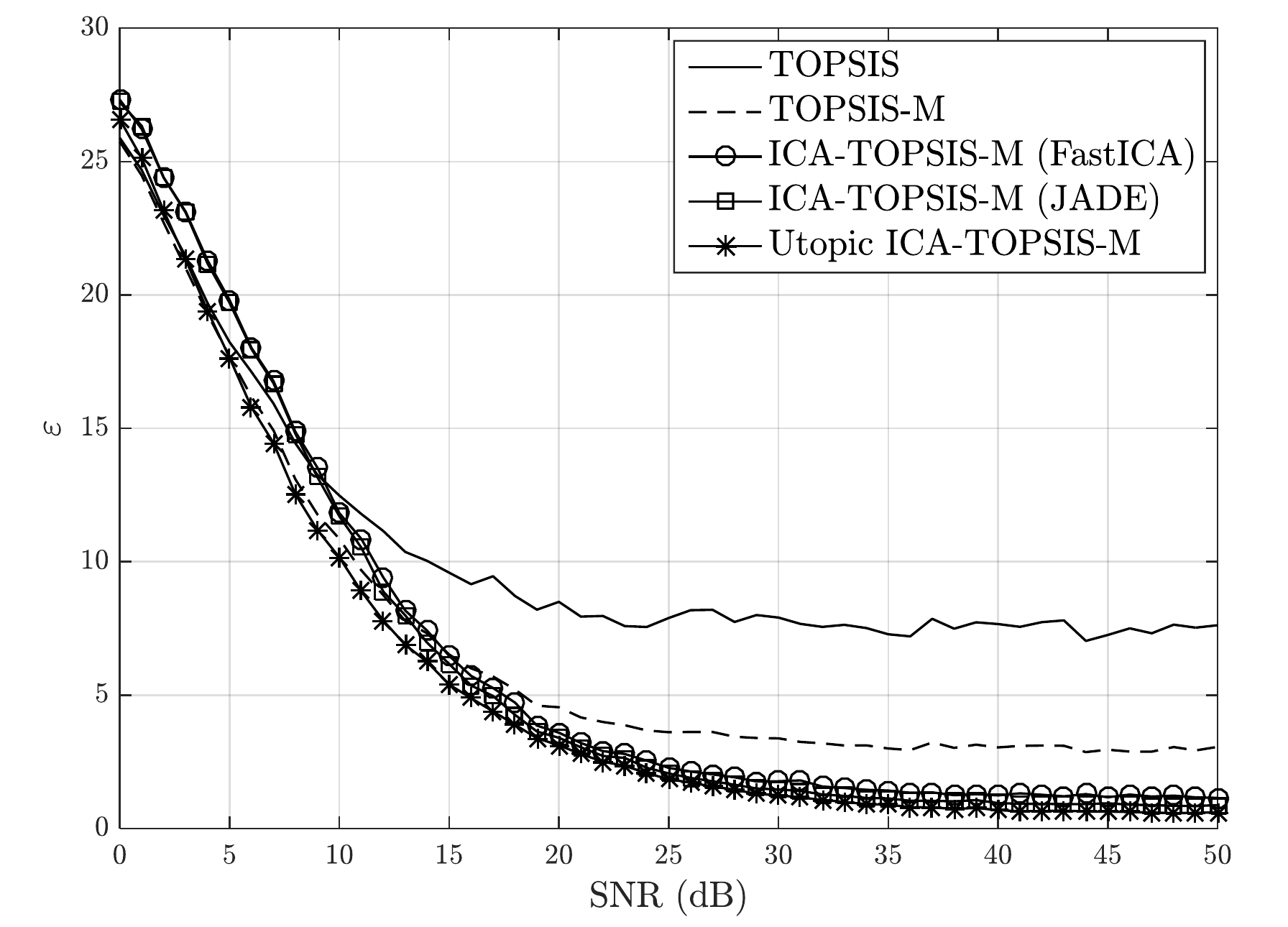}
\label{fig:snr2eitm}}
\caption{Comparison of mean absolute errors for different SNR values.}
\label{fig:snr2e}
\end{figure}

With respect to the ICA algorithms, the methods based on JADE achieved the highest values of the considered performance indexes. This is directly related to the performance in estimating the mixing matrix and, consequently, the latent criteria. Since JADE provided a better estimation of $\mathbf{A}$ compared to the FastICA, it also provided a ranking of alternatives closer to the target one.

\subsubsection{Comparison for different numbers of alternatives}
\label{subsubsec:exp3}

In the latter experiment we compared the considered approaches with a fix number of alternatives and by varying the level of noise in the mixing process. Conversely, in this third experiment, we compare them by fixing $SNR=30$ dB (a low noise level) and varying the number of alternatives. The results (averaged over 500 simulations for each number of alternatives) for the Kendall tau coefficient, Pearson's correlation coefficient and mean absolute error are illustrated in Figures~\ref{fig:nalt2k},~\ref{fig:nalt2c} and~\ref{fig:nalt2e}.

\begin{figure}[ht]
\centering
\subfloat[ICA-TOPSIS approach]{\includegraphics[width=4.5in]{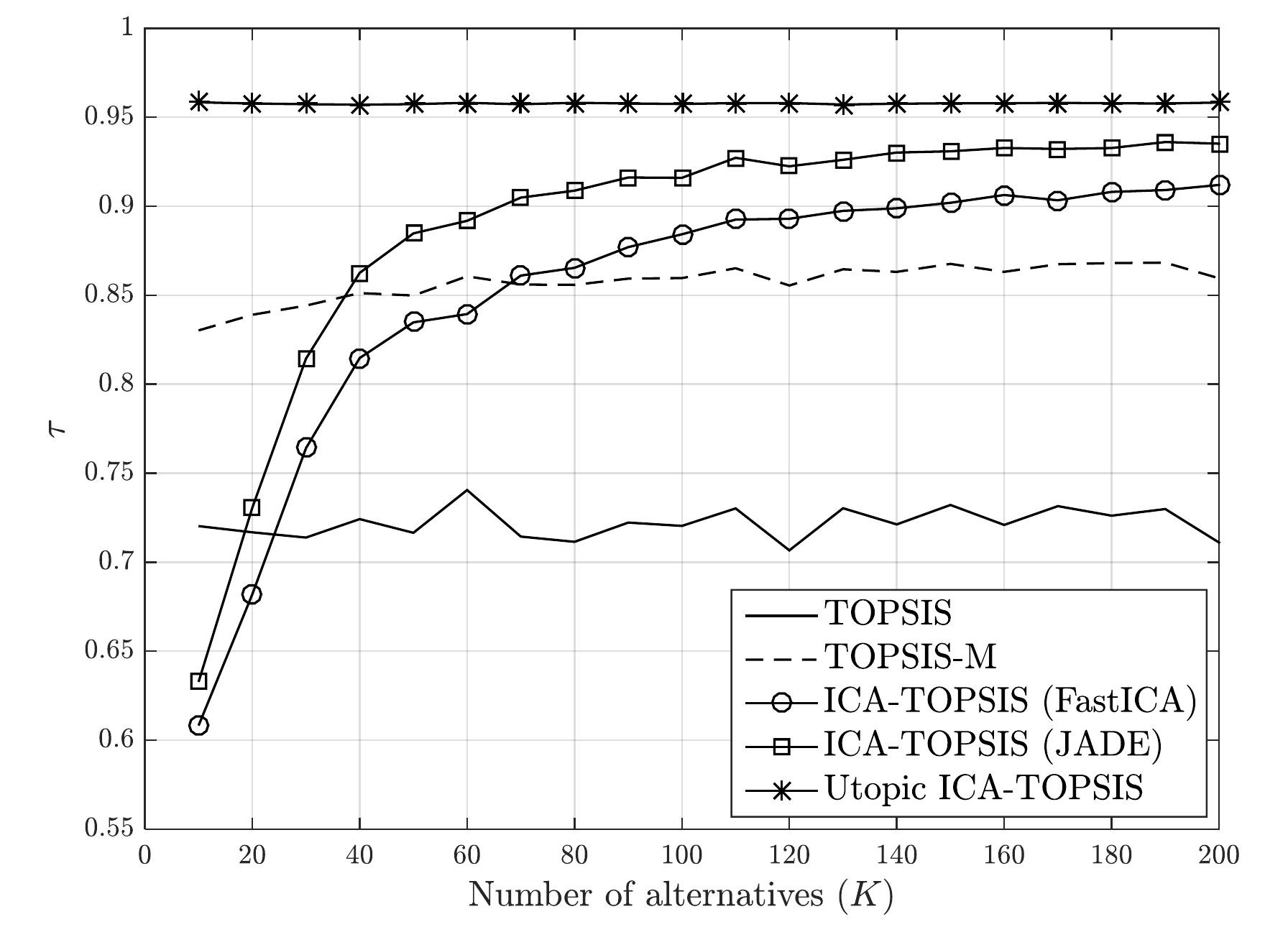}
\label{fig:nalt2kit}}
\hfil
\subfloat[ICA-TOPSIS-M approach]{\includegraphics[width=4.5in]{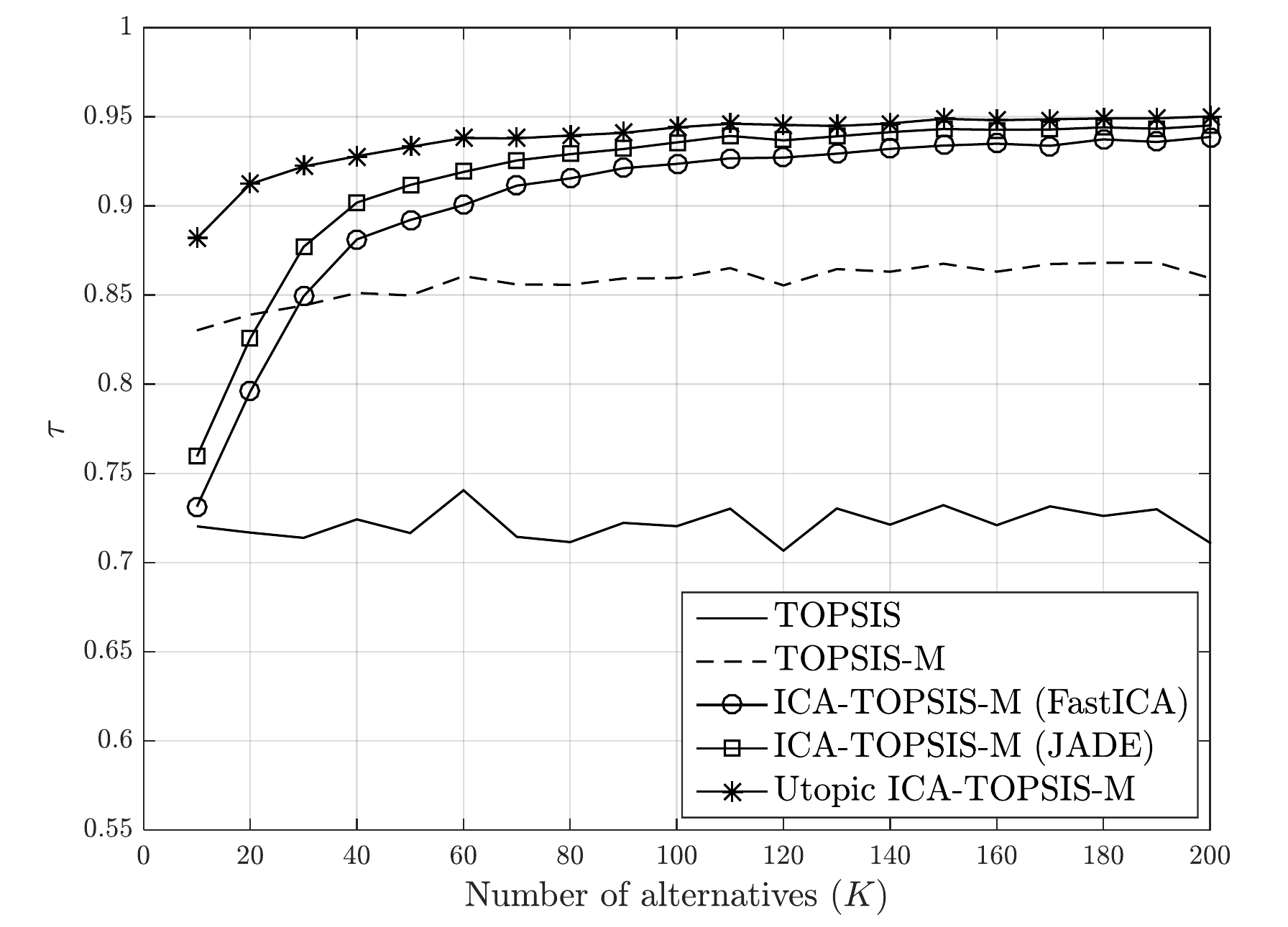}
\label{fig:nalt2kitm}}
\caption{Comparison of Kendall tau coefficients for different number of alternatives.}
\label{fig:nalt2k}
\end{figure}

\begin{figure}[ht]
\centering
\subfloat[ICA-TOPSIS approach]{\includegraphics[width=4.5in]{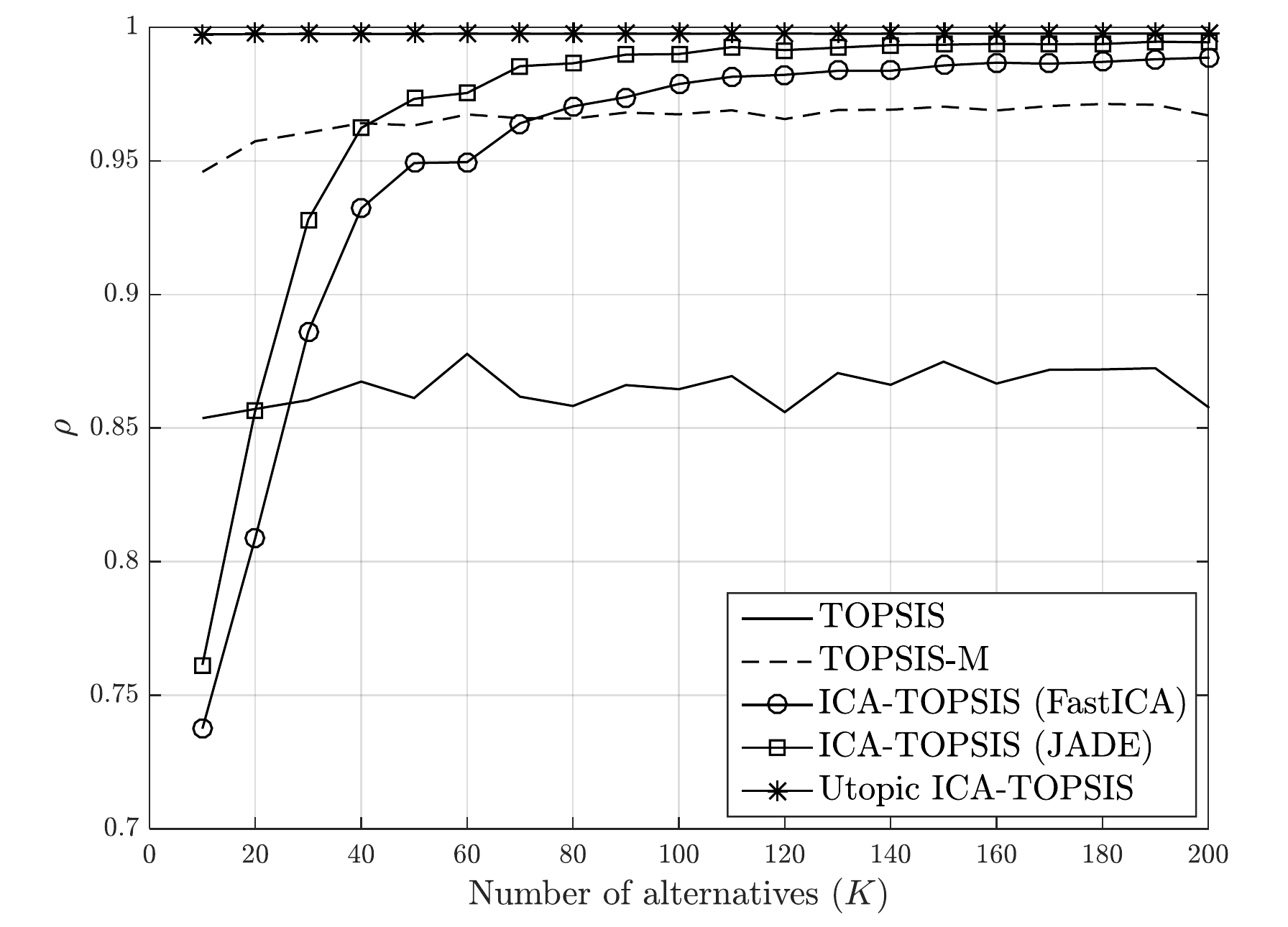}
\label{fig:nalt2cit}}
\hfil
\subfloat[ICA-TOPSIS-M approach]{\includegraphics[width=4.5in]{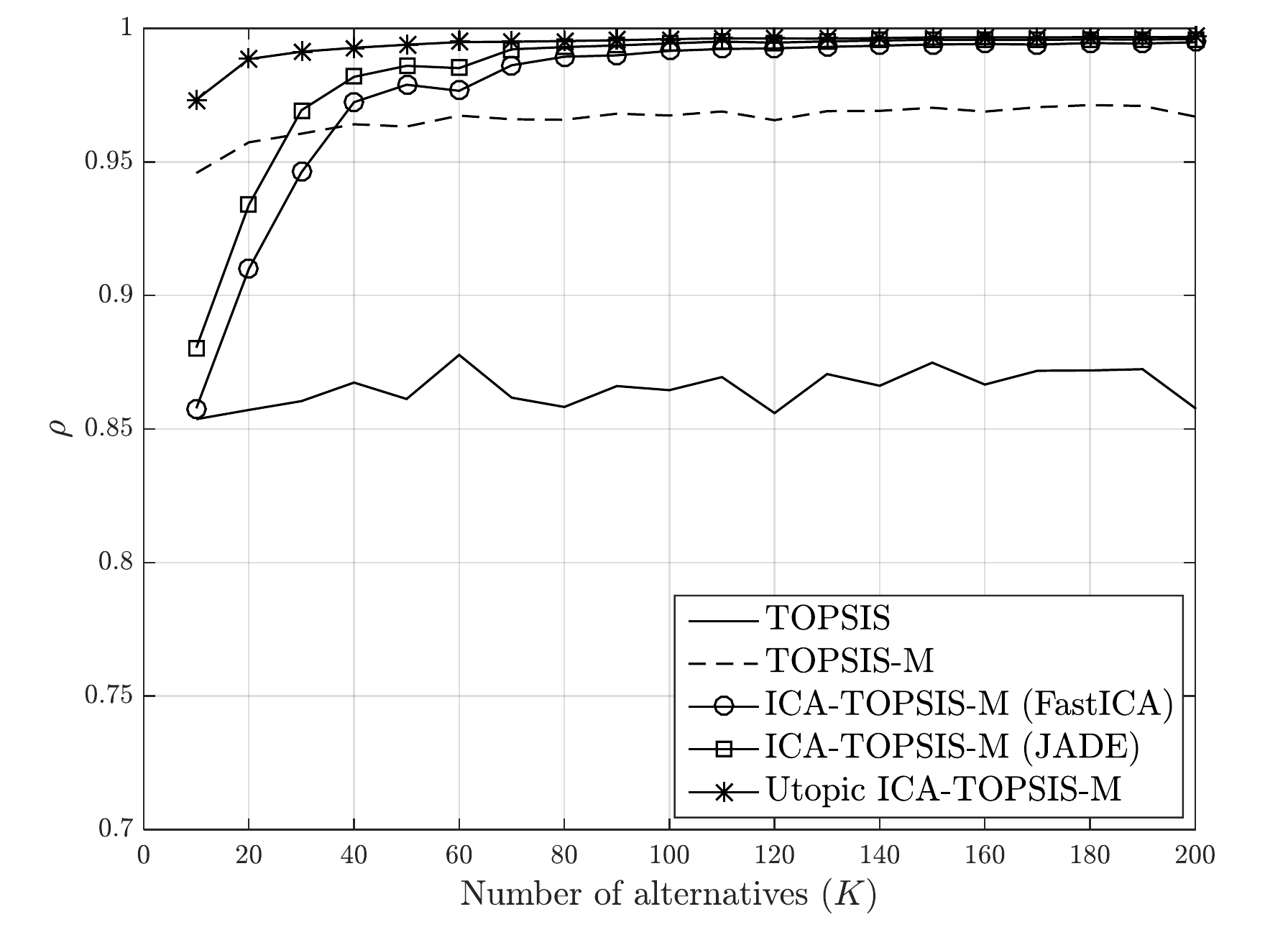}
\label{fig:nalt2citm}}
\caption{Comparison of Pearson's correlation coefficients for different number of alternatives.}
\label{fig:nalt2c}
\end{figure}

\begin{figure}[ht]
\centering
\subfloat[ICA-TOPSIS approach]{\includegraphics[width=4.5in]{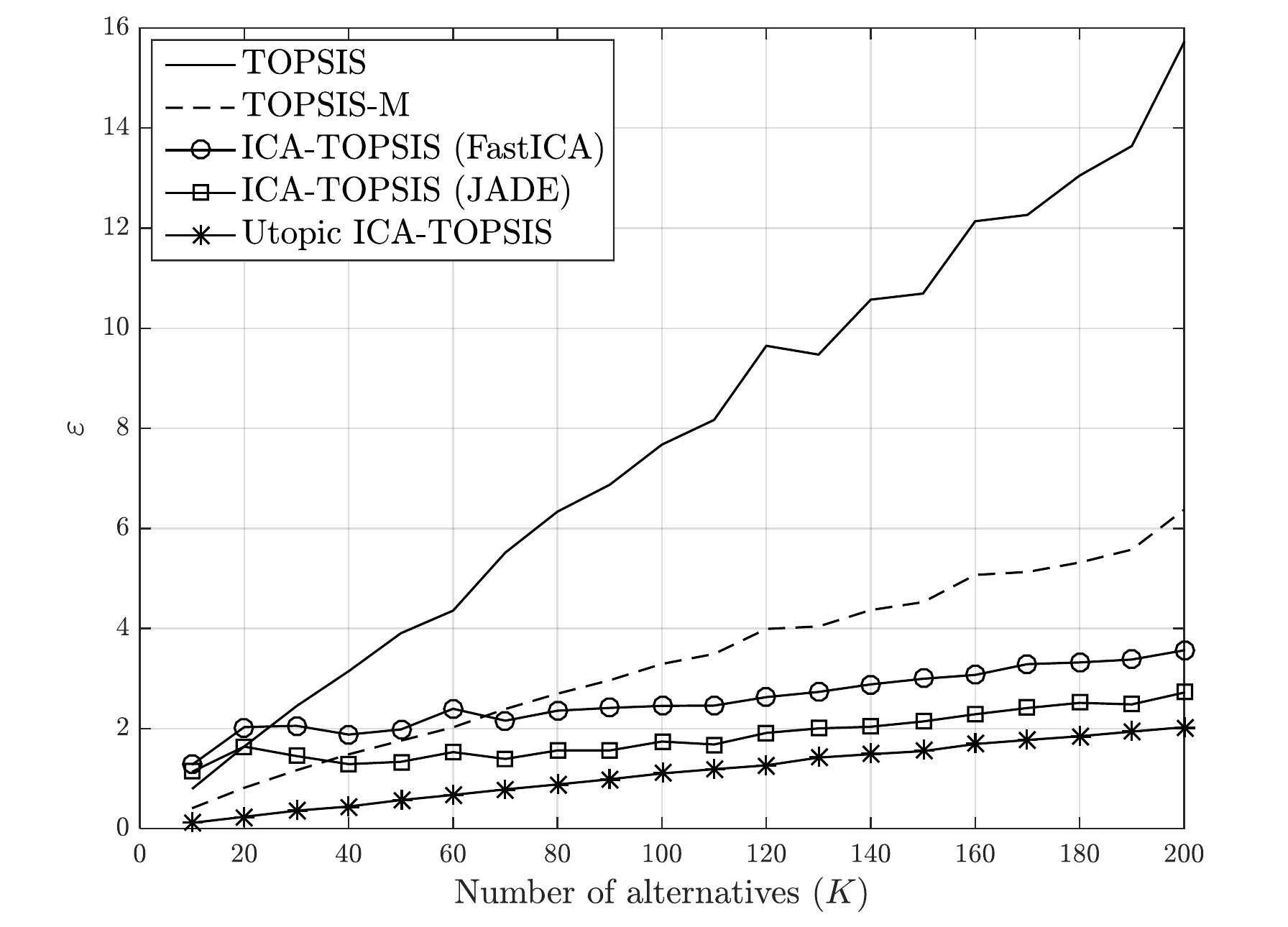}
\label{fig:nalt2eit}}
\hfil
\subfloat[ICA-TOPSIS-M approach]{\includegraphics[width=4.5in]{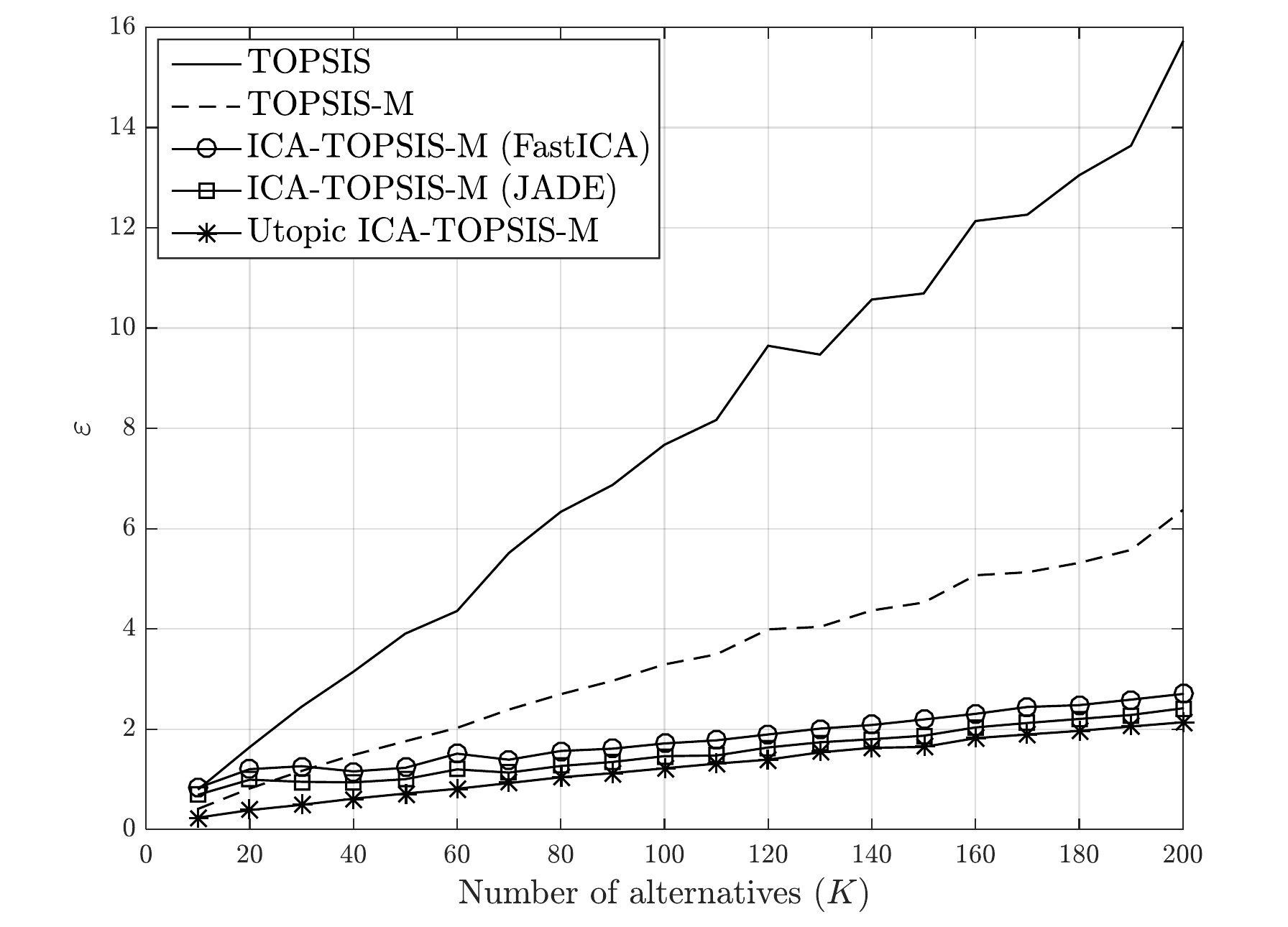}
\label{fig:nalt2eitm}}
\caption{Comparison of mean absolute error for different number of alternatives.}
\label{fig:nalt2e}
\end{figure}

Similarly as obtained in Section~\ref{subsubsec:exp2}, one may note that both ICA-TOPSIS and ICA-TOPSIS-M approaches performed better compared to TOPSIS and TOPSIS-M, specially for a number of alternatives greater 30. For $K$ lower than 30, the number of samples is not enough to the ICA technique perform a good estimation of the mixing matrix. If we compare the performance of the ICA algorithms, JADE also provided a better estimation of the mixing matrix and, consequently, the ranking of alternatives.

\subsubsection{Numerical experiments with more than two criteria}
\label{subsubsec:exp4}

In the aforementioned experiments, we considered a MCDM problem with only two criteria. All the obtained results indicated that the ICA-TOPSIS-M approach based on JADE algorithm can better deal with the addressed MCDM problem. In order to verify the robustness of this approach, we compare it with TOPSIS and TOPSIS-M in decision problems with more than two criteria. For instance, we consider scenarios with different levels of noise ($SNR=15$, $SNR=30$ and $SNR=45$ dB) and different numbers of alternatives ($K=30$, $K=100$ and $K=170$). The average value and standard deviation $\sigma$ (over 1000 simulations) of each performance index for 3, 4 and 5 decision criteria are shown in Tables~\ref{tab:3crit},~\ref{tab:4crit} and~\ref{tab:5crit}, respectively. The best result for each scenario and each performance index is indicated in bold.

\begin{table}[h!]
\centering
\tiny
\caption{Results for a decision problem with $M=3$ criteria.}
\setlength{\tabcolsep}{3pt}
\renewcommand{\arraystretch}{1.5}
\begin{tabular}{|c|c|c|c|c|c|c|c|c|c|c|}
\hline
\textbf{Performance} & \multirow{ 2}{*}{\textbf{Method}} & \multicolumn{9}{c|}{\textbf{Scenarios:} $\left\{SNR,K\right\}$} \\
\cline{3-11}
\textbf{index} &  & $\left\{15,30 \right\}$ & $\left\{15,100 \right\}$ & $\left\{15,170 \right\}$ & $\left\{30,30 \right\}$ & $\left\{30,100 \right\}$ & $\left\{30,170 \right\}$ & $\left\{45,30 \right\}$ & $\left\{45,100 \right\}$ & $\left\{45,170 \right\}$ \\
\hline
 & \multirow{ 2}{*}{TOPSIS} & 0.5088 & 0.5087 & 0.5043 & 0.5429 & 0.5365 & 0.5434 & 0.5372 & 0.5389 & 0.5228 \\
 &  & (0.1784) & (0.1671) & (0.1665) & (0.2026) & (0.1947) & (0.1932) & (0.2061) & (0.2052) & (0.1945) \\
\cline{2-11}
$\tau$ & \multirow{ 2}{*}{TOPSIS-M} & \textbf{0.6402} & 0.6561 & 0.6585 & 0.7235 & 0.7437 & 0.7458 & 0.7332 & 0.7455 & 0.7472 \\
$(\sigma_{\tau})$ &  & (0.1021) & (0.0761) & (0.0742) & (0.1299) & (0.1179) & (0.1133) & (0.1308) & (0.1239) & (0.1201) \\
\cline{2-11}
 & ICA-TOPSIS-M & 0.5667 & \textbf{0.6892} & \textbf{0.7167} & \textbf{0.7448} & \textbf{0.9089} & \textbf{0.9258} & \textbf{0.7808} & \textbf{0.9277} & \textbf{0.9478} \\
 & (JADE) & (0.1979) & (0.0983) & (0.0691) & (0.2275) & (0.0337) & (0.0235) & (0.1842) & (0.0400) & (0.0269) \\
\hline
 & \multirow{ 2}{*}{TOPSIS} & 0.6900 & 0.6940 & 0.6901 & 0.7170 & 0.7149 & 0.7207 & 0.7105 & 0.7114 & 0.6985 \\
 &  & (0.1929) & (0.1817) & (0.1821) & (0.2036) & (0.1993) & (0.1959) & (0.2041) & (0.2070) & (0.1973) \\
\cline{2-11}
$\rho$ & \multirow{ 2}{*}{TOPSIS-M} & \textbf{0.8358} & 0.8495 & 0.8519 & \textbf{0.8933} & 0.9055 & 0.9075 & 0.8971 & 0.9048 & 0.9064 \\
$(\sigma_{\rho})$ &  & (0.0822) & (0.0629) & (0.0619) & (0.0914) & (0.0795) & (0.0767) & (0.0871) & (0.0822) & (0.0784) \\
\cline{2-11}
 & ICA-TOPSIS-M & 0.7456 & \textbf{0.8710} & \textbf{0.8957} & 0.8787 & \textbf{0.9893} & \textbf{0.9931} & \textbf{0.9092} & \textbf{0.9926} & \textbf{0.9963} \\
 & (JADE) & (0.2300) & (0.0989) & (0.0662) & (0.2347) & (0.0086) & (0.0047) & (0.1714) & (0.0139) & (0.0037) \\
\hline
 & \multirow{ 2}{*}{TOPSIS} & 0.9157 & 3.0093 & 5.1098 & 0.8535 & 2.8094 & 4.6992 & 0.8838 & 2.8678 & 5.0050 \\
 &  & (0.5336) & (1.4848) & (2.4835) & (0.5579) & (1.6000) & (2.7342) & (0.5780) & (1.7304) & (2.7579) \\
\cline{2-11}
$\varepsilon$ & \multirow{ 2}{*}{TOPSIS-M} & \textbf{0.5943} & 1.8161 & 3.0007 & \textbf{0.4359} & 1.2642 & 2.1041 & 0.4205 & 1.2642 & 2.1007 \\
$(\sigma_{\varepsilon})$ &  & (0.3057) & (0.5962) & (0.9661) & (0.3138) & (0.7348) & (1.1600) & (0.2918) & (0.7806) & (1.2339) \\
\cline{2-11}
 & ICA-TOPSIS-M & 0.7757 & \textbf{1.6526} & \textbf{2.4284} & 0.4428 & \textbf{0.4003} & \textbf{0.5303} & \textbf{0.3749} & \textbf{0.2999} & \textbf{0.3584} \\
 & (JADE) & (0.5646) & (0.8438) & (0.9990) & (0.5922) & (0.1910) & (0.1978) & (0.4669) & (0.2100) & (0.2016) \\
\hline
\end{tabular}
\label{tab:3crit} 
\end{table}

\begin{table}[h!]
\centering
\tiny
\caption{Results for a decision problem with $M=4$ criteria.}
\setlength{\tabcolsep}{3pt}
\renewcommand{\arraystretch}{1.5}
\begin{tabular}{|c|c|c|c|c|c|c|c|c|c|c|}
\hline
\textbf{Performance} & \multirow{ 2}{*}{\textbf{Method}} & \multicolumn{9}{c|}{\textbf{Scenarios:} $\left\{SNR,K\right\}$} \\
\cline{3-11}
\textbf{index} &  & $\left\{15,30 \right\}$ & $\left\{15,100 \right\}$ & $\left\{15,170 \right\}$ & $\left\{30,30 \right\}$ & $\left\{30,100 \right\}$ & $\left\{30,170 \right\}$ & $\left\{45,30 \right\}$ & $\left\{45,100 \right\}$ & $\left\{45,170 \right\}$ \\
\hline
 & \multirow{ 2}{*}{TOPSIS} & 0.4257 & 0.4275 & 0.4238 & 0.4466 & 0.4427 & 0.4301 & 0.4409 & 0.4408 & 0.4319 \\
 &  & (0.1854) & (0.1690) & (0.1618) & (0.2011) & (0.1916) & (0.1913) & (0.2074) & (0.1884) & (0.1906) \\
\cline{2-11}
$\tau$ & \multirow{ 2}{*}{TOPSIS-M} & \textbf{0.5690} & 0.5829 & 0.5867 & \textbf{0.6353} & 0.6520 & 0.6478 & \textbf{0.6253} & 0.6489 & 0.6455 \\
$(\sigma_{\tau})$ &  & (0.1165) & (0.1000) & (0.0934) & (0.1435) & (0.1364) & (0.1542) & (0.1592) & (0.1378) & (0.1519) \\
\cline{2-11}
 & ICA-TOPSIS-M & 0.4410 & \textbf{0.5969} & \textbf{0.6488} & 0.5961 & \textbf{0.8793} & \textbf{0.9066} & 0.5971 & \textbf{0.9049} & \textbf{0.9317} \\
 & (JADE) & (0.2060) & (0.1644) & (0.1247) & (0.2607) & (0.0740) & (0.0433) & (0.2597) & (0.0434) & (0.0282) \\
\hline
 & \multirow{ 2}{*}{TOPSIS} & 0.5984 & 0.6032 & 0.6003 & 0.6178 & 0.6148 & 0.6016 & 0.6081 & 0.6139 & 0.6038 \\
 &  & (0.2210) & (0.2039) & (0.1957) & (0.2306) & (0.2199) & (0.2244) & (0.2386) & (0.2147) & (0.2224) \\
\cline{2-11}
$\rho$ & \multirow{ 2}{*}{TOPSIS-M} & \textbf{0.7709} & \textbf{0.7825} & 0.7863 & \textbf{0.8245} & 0.8373 & 0.8314 & \textbf{0.8118} & 0.8338 & 0.8280 \\
$(\sigma_{\rho})$ &  & (0.1130) & (0.0982) & (0.0919) & (0.1251) & (0.1179) & (0.1507) & (0.1470) & (0.1178) & (0.1475) \\
\cline{2-11}
 & ICA-TOPSIS-M & 0.6082 & 0.7824 & \textbf{0.8356} & 0.7515 & \textbf{0.9774} & \textbf{0.9879} & 0.7504 & \textbf{0.9879} & \textbf{0.9940} \\
 & (JADE) & (0.2609) & (0.1931) & (0.1374) & (0.2916) & (0.0632) & (0.0368) & (0.2844) & (0.0172) & (0.0048) \\
\hline
 & \multirow{ 2}{*}{TOPSIS} & 1.1511 & 3.7380 & 6.4295 & 1.1084 & 3.6643 & 6.3784 & 1.1105 & 3.6518 & 6.4229 \\
 &  & (0.6152) & (1.6822) & (2.6437) & (0.6111) & (1.7920) & (3.0382) & (0.6289) & (1.7573) & (3.0785) \\
\cline{2-11}
$\varepsilon$ & \multirow{ 2}{*}{TOPSIS-M} & \textbf{0.7628} & 2.3556 & 3.9702 & \textbf{0.6288} & 1.8876 & 3.2587 & \textbf{0.6345} & 1.8933 & 3.2917 \\
$(\sigma_{\varepsilon})$ &  & (0.3851) & (0.8902) & (1.3315) & (0.3859) & (1.0053) & (2.0816) & (0.4180) & (0.9950) & (2.0601) \\
\cline{2-11}
 & ICA-TOPSIS-M & 1.1075 & \textbf{2.3554} & \textbf{3.2790} & 0.7673 & \textbf{0.5652} & \textbf{0.7069} & 0.7549 & \textbf{0.4227} & \textbf{0.4938} \\
 & (JADE) & (0.6696) & (1.5372) & (1.8270) & (0.7120) & (0.5634) & (0.5697) & (0.7086) & (0.2513) & (0.2239) \\
\hline
\end{tabular}
\label{tab:4crit} 
\end{table}

\begin{table}[h!]
\centering
\tiny
\caption{Results for a decision problem with $M=5$ criteria.}
\setlength{\tabcolsep}{3pt}
\renewcommand{\arraystretch}{1.5}
\begin{tabular}{|c|c|c|c|c|c|c|c|c|c|c|}
\hline
\textbf{Performance} & \multirow{ 2}{*}{\textbf{Method}} & \multicolumn{9}{c|}{\textbf{Scenarios:} $\left\{SNR,K\right\}$} \\
\cline{3-11}
\textbf{index} &  & $\left\{15,30 \right\}$ & $\left\{15,100 \right\}$ & $\left\{15,170 \right\}$ & $\left\{30,30 \right\}$ & $\left\{30,100 \right\}$ & $\left\{30,170 \right\}$ & $\left\{45,30 \right\}$ & $\left\{45,100 \right\}$ & $\left\{45,170 \right\}$ \\
\hline
 & \multirow{ 2}{*}{TOPSIS} & 0.3659 & 0.3623 & 0.3606 & 0.3710 & 0.3671 & 0.3720 & 0.3661 & 0.3726 & 0.3682 \\
 &  & (0.1848) & (0.1529) & (0.1584) & (0.1895) & (0.1738) & (0.1656) & (0.1932) & (0.1755) & (0.1661) \\
\cline{2-11}
$\tau$ & \multirow{ 2}{*}{TOPSIS-M} & \textbf{0.4948} & \textbf{0.5238} & 0.5206 & \textbf{0.5368} & 0.5586 & 0.5714 & \textbf{0.5293} & 0.5647 & 0.5662 \\
$(\sigma_{\tau})$ &  & (0.1322) & (0.1087) & (0.1164) & (0.1671) & (0.1623) & (0.1589) & (0.1786) & (0.1776) & (0.1817) \\
\cline{2-11}
 & ICA-TOPSIS-M & 0.3383 & 0.4883 & \textbf{0.5567} & 0.4322 & \textbf{0.8302} & \textbf{0.8802} & 0.4215 & \textbf{0.8662} & \textbf{0.9181} \\
 & (JADE) & (0.2113) & (0.1743) & (0.1725) & (0.2494) & (0.1280) & (0.0718) & (0.2627) & (0.0976) & (0.0376) \\
\hline
 & \multirow{ 2}{*}{TOPSIS} & 0.5272 & 0.5264 & 0.5248 & 0.5337 & 0.5285 & 0.5372 & 0.5247 & 0.5371 & 0.5328 \\
 &  & (0.2234) & (0.1941) & (0.2010) & (0.2329) & (0.2152) & (0.2037) & (0.2358) & (0.2172) & (0.2072) \\
\cline{2-11}
$\rho$ & \multirow{ 2}{*}{TOPSIS-M} & \textbf{0.6882} & \textbf{0.7217} & 0.7171 & \textbf{0.7291} & 0.7474 & 0.7599 & \textbf{0.7186} & 0.7506 & 0.7522 \\
$(\sigma_{\rho})$ &  & (0.1449) & (0.1198) & (0.1325) & (0.1752) & (0.1740) & (0.1714) & (0.1994) & (0.1993) & (0.2048) \\
\cline{2-11}
 & ICA-TOPSIS-M & 0.4805 & 0.6690 & \textbf{0.7411} & 0.5889 & \textbf{0.9488} & \textbf{0.9774} & 0.5719 & \textbf{0.9688} & \textbf{0.9909} \\
 & (JADE) & (0.2824) & (0.2136) & (0.2070) & (0.3073) & (0.1158) & (0.0573) & (0.3290) & (0.0777) & (0.0207) \\
\hline
 & \multirow{ 2}{*}{TOPSIS} & 1.3198 & 4.3508 & 7.3898 & 1.3007 & 4.3489 & 7.2475 & 1.3135 & 4.3116 & 7.3241 \\
 &  & (0.6289) & (1.5793) & (2.7033) & (0.6360) & (1.7759) & (2.8046) & (0.6422) & (1.7707) & (2.7918) \\
\cline{2-11}
$\varepsilon$ & \multirow{ 2}{*}{TOPSIS-M} & \textbf{0.9606} & \textbf{2.8516} & 4.8531 & \textbf{0.8413} & 2.6205 & 4.2270 & \textbf{0.8696} & 2.5992 & 4.3285 \\
$(\sigma_{\varepsilon})$ &  & (0.4579) & (1.0423) & (1.8217) & (0.4908) & (1.4413) & (2.3095) & (0.5239) & (1.5956) & (2.7424) \\
\cline{2-11}
 & ICA-TOPSIS-M & 1.4007 & 3.2389 & \textbf{4.5328} & 1.1393 & \textbf{0.8537} & \textbf{0.9379} & 1.1933 & \textbf{0.6471} & \textbf{0.6097} \\
 & (JADE) & (0.7291) & (1.7367) & (2.7336) & (0.7576) & (0.9756) & (0.8476) & (0.8199) & (0.6713) & (0.3811) \\
\hline
\end{tabular}
\label{tab:5crit} 
\end{table}

One may note that TOPSIS approach leaded to the worst result in all scenarios. Comparing TOPSIS-M with ICA-TOPSIS-M, the latter achieved the higher performance for all scenarios whose level of noise and number of alternatives were, at least, 30 and 100, respectively. Moreover, for all scenarios with 170 alternatives, including the one with $SNR=15$ dB, ICA-TOPSIS-M provided the better result. Most scenarios in which TOPSIS-M performed better in comparison with ICA-TOPSIS-M, the level of noise was 15 dB and the number of alternatives was 30. This is a consequence of the performance of the ICA algorithm, since the strong interference of noise and the few number of samples make the estimation process difficult.

\subsection{Experiment on real data}
\label{subsec:exp_real}

This experiment comprises the application of the proposed approaches to rank a set of $K=96$ countries based on their evaluation according to $M=3$ observed criteria: forest area (\% of land area), gross national income (GNI) per capita (current US\$) and life expectancy at birth (years). This data refers to 2015 and was collected from World Bank at http:\/\/www.worldbank.org\/. It is worth mentioning that we adopted this set of criteria only to illustrate our proposal in a real data. Therefore, more criteria from different domains can be used in a problem of ranking countries.

Since the life expectancy at birth is affected by several factors, such as social and economic ones, this criterion may be dependent on the GNI. In order to verify this assumption, we present the scatter plots of each pair of criteria in Figures~\ref{fig:real_data12},~\ref{fig:real_data13} and~\ref{fig:real_data23}. One may remark that GNI per capita and life expectancy at birth have a certain degree of dependence. For instance, the correlation coefficient between these two criteria\footnote{For the others pairs of criteria, we have $\rho \left(\mathcal{C}_1, \mathcal{C}_2 \right) = 0.07$ and $\rho \left(\mathcal{C}_1, \mathcal{C}_3 \right) = 0.14$} is $\rho \left(\mathcal{C}_2, \mathcal{C}_3 \right) = 0.68$. Therefore, even when equal importance is attributed to the decision criteria, the application of TOPSIS on the observed data shall amplify the importance of the latent factors that drive both GNI per capita and life expectancy at birth. So, in this scenario, it becomes interesting to find a representation in which the latent criteria are independent, as exposed in Section~\ref{sec:prob}. These latent criteria may carry, for instance, information associated with environmental, economic and social aspects.

\begin{figure}[ht]
\centering
\subfloat[Scatter plot of $\mathcal{C}_1$ and $\mathcal{C}_2$.]{\includegraphics[width=2.6in]{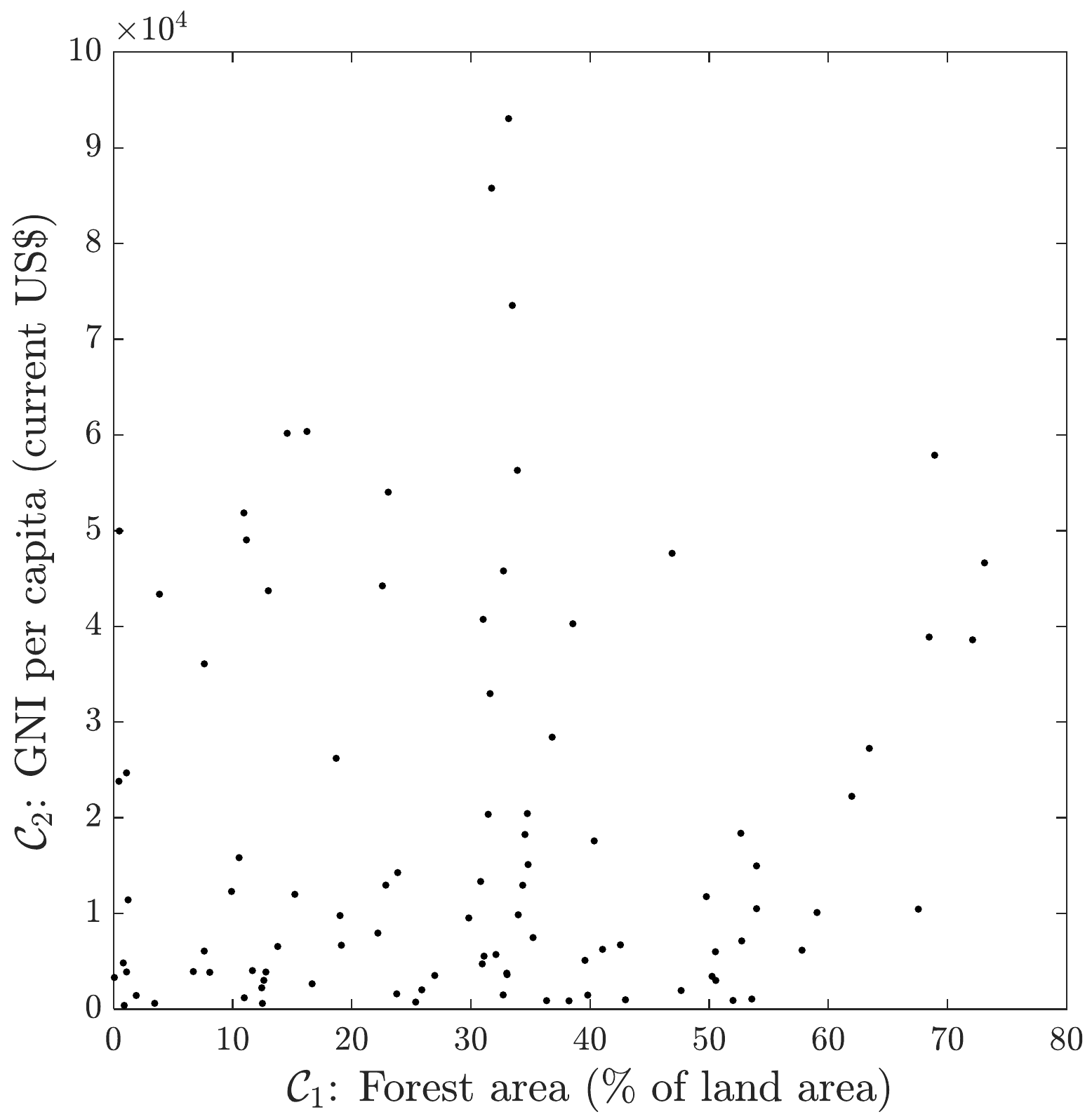}
\label{fig:real_data12}}
\hfil
\subfloat[Scatter plot of $\mathcal{C}_1$ and $\mathcal{C}_3$.]{\includegraphics[width=2.6in]{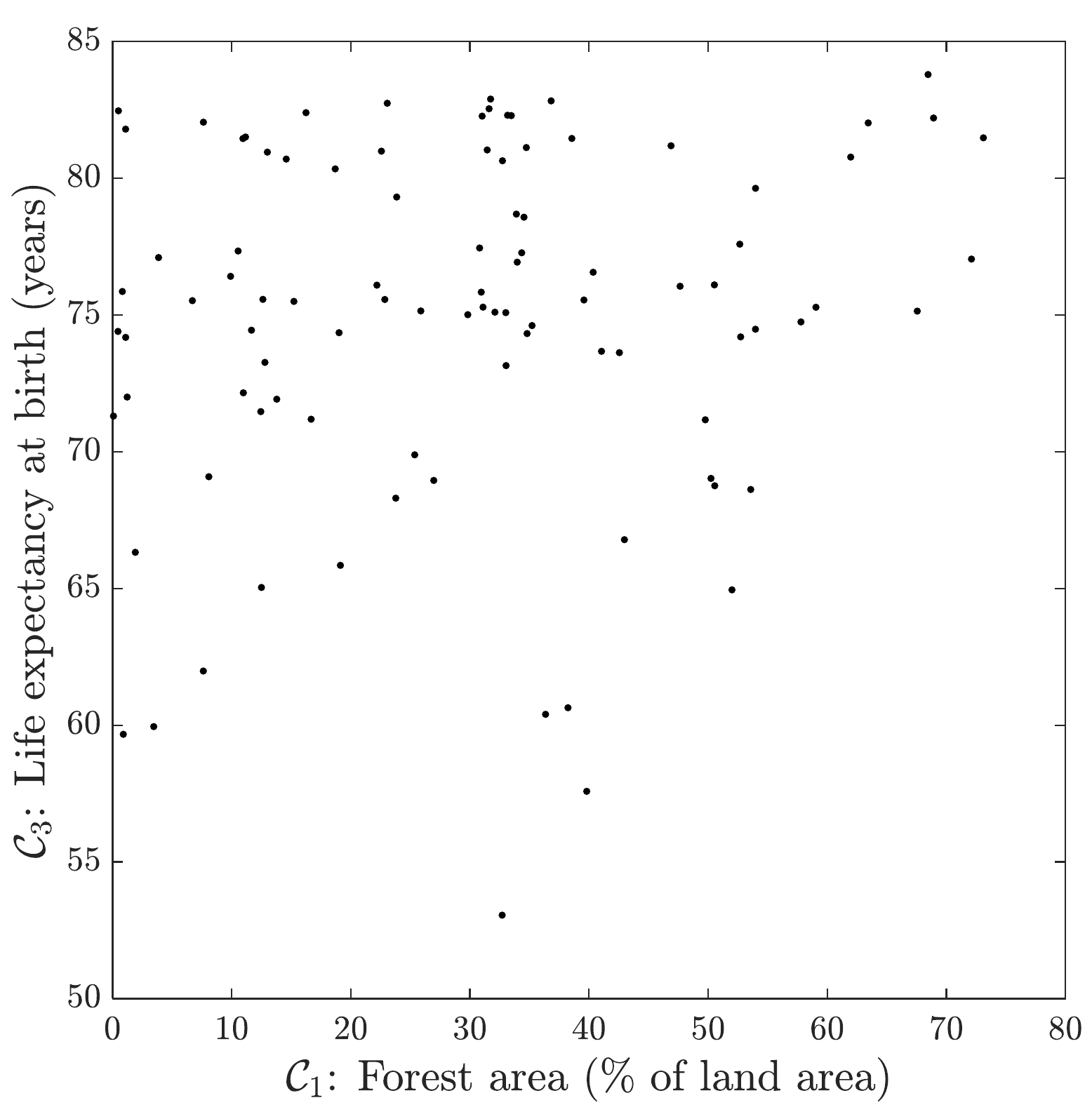}
\label{fig:real_data13}}
\hfil
\subfloat[Scatter plot of $\mathcal{C}_2$ and $\mathcal{C}_3$.]{\includegraphics[width=2.6in]{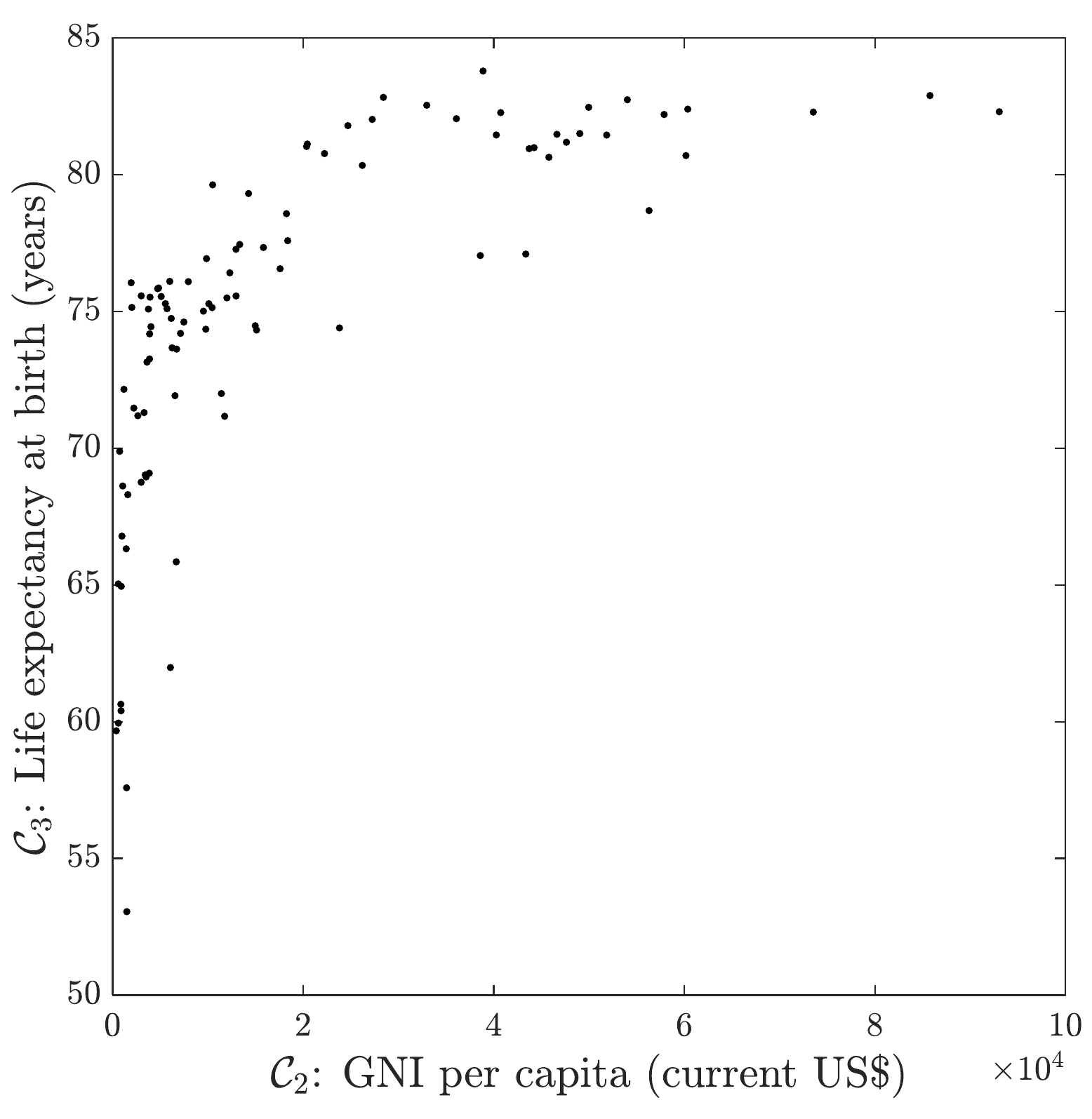}
\label{fig:real_data23}}
\caption{Real data visualization.}
\label{fig:real_data}
\end{figure}

Table~\ref{tab:comp_res} presents a comparison in the position of the first 10 alternatives in the ranking provided by each considered approaches. One may note in the ranking provided by TOPSIS that alternatives with good evaluations in both $\mathcal{C}_2$ and $\mathcal{C}_3$ are better ranked, even with a bad performance on the first criterion. For instance, the evaluations of the first alternative in the ranking, $\mathcal{A}_{66}$, are equal to $(33.1613; 93050; 82.3049)$.

\begin{table}[htb!]
  \begin{center}
	\footnotesize
  \caption{Rankings obtained in experiments with real data.}\label{tab:comp_res}
  {
  \setlength{\tabcolsep}{3pt}
  \renewcommand{\arraystretch}{1.5}
  \begin{tabular}{|c|ccc|cccc|}
    \hline
    \multirow{2}{*}{Alternatives} & \multicolumn{3}{|c|}{Criteria evaluations} & \multicolumn{4}{c|}{Position in the ranking} \\
     & $\mathcal{C}_1$ & $\mathcal{C}_2$ & $\mathcal{C}_3$ & TOPSIS & TOPSIS-M & ICA-TOPSIS & ICA-TOPSIS-M \\
    \hline
    $\mathcal{A}_{4}$ & 16.2388 & 60360 & 82.4000 & 10 & 15 & 59 & 15 \\
    \hline
		$\mathcal{A}_{5}$ & 46.8839 & 47630 & 82.4000 & 9 & 8 & 11 & 7 \\
    \hline
		$\mathcal{A}_{13}$ & 59.0488 & 10100 & 75.2840 & 28 & 24 & 8 & 30 \\
    \hline
		$\mathcal{A}_{14}$ & 72.1063 & 38590 & 77.0460 & 7 & 6 & 3 & 8 \\
    \hline
		$\mathcal{A}_{33}$ & 73.1072 & 46630 & 81.4805 & 5 & 4 & 2 & 2 \\
    \hline
		$\mathcal{A}_{46}$ & 68.4606 & 38880 & 83.7939 & 8 & 7 & 4 & 6 \\
    \hline
		$\mathcal{A}_{49}$ & 63.4386 & 27250 & 82.0244 & 14 & 10 & 5 & 9 \\
    \hline
		$\mathcal{A}_{50}$ & 53.9723 & 14970 & 74.4805 & 29 & 25 & 10 & 31 \\
    \hline
		$\mathcal{A}_{52}$ & 33.4749 & 73530 & 82.2927 & 4 & 5 & 25 & 5 \\
    \hline
		$\mathcal{A}_{54}$ & 67.5544 & 10450 & 75.1430 & 22 & 18 & 6 & 25 \\
    \hline
		$\mathcal{A}_{66}$ & 33.1613 & 93050 & 82.3049 & 1 & 2 & 21 & 4 \\
    \hline
		$\mathcal{A}_{68}$ & 57.7914 & 6160 & 74.7470 & 31 & 28 & 9 & 35 \\
    \hline
		$\mathcal{A}_{79}$ & 61.9662 & 22240 & 80.7756 & 18 & 12 & 7 & 14 \\
    \hline
		$\mathcal{A}_{83}$ & 68.9229 & 57880 & 82.2049 & 3 & 1 & 1 & 1 \\
    \hline
		$\mathcal{A}_{84}$ & 31.7340 & 85780 & 82.8976 & 2 & 3 & 23 & 3 \\
    \hline
		$\mathcal{A}_{93}$ & 33.8997 & 56300 & 78.6902 & 6 & 9 & 29 & 10 \\
		\hline
  \end{tabular}
  }
  \end{center}
\end{table}

On the other hand, the application of TOPSIS-M favours alternatives with a good evaluation on $\mathcal{C}_1$. In this case, the evaluations of the first alternative in the ranking, $\mathcal{A}_{83}$, are equal to $(68.9229; 57880; 82.2049)$. This is highlighted by the ranking provided by ICA-TOPSIS-M, in which the first two alternatives ($\mathcal{A}_{83}$ and $\mathcal{A}_{33}$) have good evaluation on $\mathcal{C}_1$.

An interesting result is obtained by the application of ICA-TOPSIS. Most of the first 10 alternatives in the ranking have good evaluations on the first criterion. Therefore, an alternative that performs well in $\mathcal{C}_1$ and in one of the other criteria is preferable compared to an alternative that performs well only in $\mathcal{C}_2$ and $\mathcal{C}_3$, which are dependent criteria. Moreover, some alternatives well evaluated in $\mathcal{C}_2$ and $\mathcal{C}_3$ that achieved good positions on the ranking provided by the other approaches were not well evaluated in ICA-TOPSIS. For instance, $\mathcal{A}_{66}$ was the first, second and forth alternative in the ranking provided by TOPSIS, TOPSIS-M and ICA-TOPSIS-M, respectively. However, by applying ICA-TOPSIS, this alternative achieved the position number 21.

\section{Conclusions}
\label{sec:concl}

Dependent criteria are frequently observed in real situations formulated as multicriteria decision making problems. Since most MCDM methods do not consider this relation among criteria on the aggregation procedure, the ranking of alternatives may be biased toward an alternative that has a good evaluation in the dependent criteria. With the goal of mitigating this biased effect, this paper proposed two different approaches, called ICA-TOPSIS and ICA-TOPSIS-M. In both approaches, the ICA technique can be seen as a preprocessing step whose aim is to extract information from the observed data in order to use as inputs of TOPSIS or TOPSIS-M.

By taking into account the experiments in synthetic data, in situations with two decision criteria, the obtained results indicates that both proposals performed better in comparison with TOPSIS and TOPSIS-M. Comparing ICA-TOPSIS and ICA-TOPSIS-M, the latter achieved the better results, specially with the application of JADE algorithm. The performance of ICA-TOPSIS-M based on JADE algorithm was also verified in scenarios with more than two decision criteria, which leads to the better results when we have moderate interference of noise and number of alternatives.

The application of the considered approaches on real data also provided interesting results. Since GNI and life expectancy at birth are dependent criteria, the use of TOPSIS directly on the observed decision data leads to a ranking in which the first alternatives are favoured by good evaluations in these two criteria. TOPSIS-M improved the results of TOPSIS, but both ICA-TOPSIS-M and ICA-TOPSIS leaded to rankings in which the importance of the other criteria is increased.

It is worth mentioning that, in order to apply the proposed approaches, one should consider the assumptions described in Section~\ref{subsec:icatopsis}. Therefore, a weakness of both ICA-TOPSIS and ICA-TOPSIS-M arises when these assumptions are not respected, since the ambiguities provided by the ICA technique may not be mitigated. In that respect, future works may be conducted to exploit other signal processing techniques that can deal with these ambiguities in the context of MCDM problems. Moreover, in this paper, we model and deal with linear dependence among criteria. Therefore, future perspectives also include the application of BSS methods that takes into account nonlinear relations among the decision data.

\vspace{2cm}

\noindent \textbf{Acknowledgements} \\

The authors would like to thank FAPESP (Processes n. 2016/21571-4 and 2017/23879-9) and CNPq (Processes n. 311357/2017-2) for the financial support. \\

%\noindent \textbf{References}
%\newpage

%\begin{thebibliography}{00}

\bibliographystyle{apacite}
\bibliography{_ref_eswa}

%\end{thebibliography}
\end{document}